\documentclass[11pt]{article}

\usepackage{amsmath}
\usepackage{amssymb}
\usepackage{amsthm}
\usepackage{fullpage}
\sloppy

\newtheorem{theorem}{Theorem}
\newtheorem{corollary}[theorem]{Corollary}
\newtheorem{lemma}[theorem]{Lemma}

\newtheorem{definition}[theorem]{Definition}
\newtheorem{claim}[theorem]{Claim}

\newtheorem{remark}[theorem]{Remark}


\newcommand{\poly}{\mathop{\mathrm{poly}}}

\newcommand{\bra}[1]{\{#1\}}
\newcommand{\abs}[1]{\left|#1\right|}
\newcommand{\setcond}[2]{\left\{#1\: \middle|\: #2\right\}}
 
\newcommand{\myceil}[1]{\lceil #1\rceil}


\newcommand{\GapP}{\mbox{\rm GapP}}

\newcommand{\ParityP}{\mbox{$\oplus$\rm P}}
\newcommand{\ModkP}{\mbox{\rm Mod$_k$P}}
\newcommand{\numP}{\mbox{\rm \#P}}

\newcommand{\AC}{\mbox{\rm AC}}

\newcommand{\field}{\mathbb{F}}
\newcommand{\F}{\mathbb{F}}
\newcommand{\real}{\mathbb{R}}

\newcommand{\naturals}{\mathbb{N}}

\newcommand{\union}{\cup}

\setlength{\parindent}{0pt}
\setlength{\parskip}{10pt}

\newcommand{\cdet}{\mathrm{Cdet}}
\newcommand{\cperm}{\mathrm{Cperm}}
\newcommand{\mdet}{\mathrm{Mdet}}
\newcommand{\mperm}{\mathrm{Mperm}}
\newcommand{\sdet}{\mathrm{sdet}}

\newcommand{\myperm}{\mathrm{perm}}
\newcommand{\fx}[1]{\field\langle #1\rangle}

\newcommand{\FX}{\field\langle X\rangle}
\newcommand{\sgn}{\mathrm{sgn}}
\newcommand{\mon}{\mathcal{M}}

\newcommand{\acc}{\mathrm{acc}}
\newcommand{\rej}{\mathrm{rej}}
\newcommand{\hc}{\mathrm{HC}} 

\title{On the Hardness of the Noncommutative Determinant}

\author{V.~Arvind and Srikanth Srinivasan\\
The Institute of Mathematical Sciences\\ C.I.T. Campus,Chennai  600 113,
India\\
\tt{\{arvind,srikanth\}@imsc.res.in} 
}

\begin{document}

\maketitle

\begin{abstract}
  In this paper we study the computational complexity of computing the
  \emph{noncommutative} determinant. We first consider the arithmetic
  circuit complexity of computing the noncommutative determinant
  polynomial. Then, more generally, we also examine the complexity of
  computing the determinant (as a function) over noncommutative
  domains. Our hardness results are summarized below:
  \begin{itemize}
  \item We show that if the noncommutative determinant polynomial has
    small noncommutative arithmetic circuits then so does the
    noncommutative permanent. Consequently, the commutative permanent
    polynomial has small commutative arithmetic circuits.
  \item For any field $\field$ we show that computing the $n\times n$
    permanent over $\field$ is polynomial-time reducible to computing
    the $2n\times 2n$ (noncommutative) determinant whose entries are
    $O(n^2)\times O(n^2)$ matrices over the field $\field$.
  \item We also derive as a consequence that computing the $n\times n$
    permanent over nonnegative rationals is polynomial-time reducible
    to computing the noncommutative determinant over Clifford algebras
    of $n^{O(1)}$ dimension.
  \end{itemize}
  Our techniques are elementary and use primarily the notion of the
  \emph{Hadamard Product} of noncommutative polynomials.
\end{abstract}

\section{Introduction}

In his seminal paper \cite{N91} Nisan first systematically studied the
problem of proving lower bounds for noncommutative computation. The
focus of his study was noncommutative arithmetic circuits,
noncommutative arithmetic formulas and noncommutative algebraic
branching programs. In his central result based on a rank argument,
Nisan shows that the noncommutative permanent or determinant
polynomials in the ring $\fx{x_{11},\cdots,x_{nn}}$ require
exponential size noncommutative algebraic branching programs.

Nisan's results are over the \emph{free} noncommutative ring $\FX$.
Chien and Sinclair, in \cite{CS04}, explore the same question over
other noncommutative algebras. They refine Nisan's rank argument to
show exponential size lower bounds for formulas computing the
permanent or determinant over specific noncommutative algebras, like
the algebra of $2\times 2$ matrices over $\F$, the quaternion algebra,
and a host of other examples. 

However, the question of whether there is a small noncommutative
\emph{circuit} for the determinant or permanent remains unanswered.
(Indeed, no explicit lower bounds are known for the general
noncommutative circuit model.) Since the existence of small
noncommutative arithmetic circuits for the permanent would imply the
existence of small \emph{commutative} arithmetic circuits for the
permanent, we have a good reason to believe that the permanent does not
have small noncommutative arithmetic circuits. However, as far as we
know, no such argument has been given for the case of the
noncommutative \emph{determinant}. Indeed, since Nisan \cite{N91} has
also shown an exponential separation between the power of
noncommutative formulas and circuits, it may very well be that the
noncommutative determinant has polynomial-sized arithmetic circuits.

Another motivation for studying the computational difficulty of
computing the noncommutative determinant (as a function) is an
approach to designing randomized approximation algorithms for the
$0-1$ permanent by designing good unbiased estimators based on the
determinant. This approach has a long history starting with
\cite{GG81,KKL+93}. Of specific interest are the works of Barvinok
\cite{B}; Chien, Rasmussen, and Sinclair \cite{CRS03}; and more
recently that of Moore and Russell \cite{MR09}.  Barvinok \cite{B}
defines a variant of the noncommutative determinant called the
\emph{symmetrized determinant} and shows that given inputs from a
\emph{constant dimensional} matrix algebra, the symmetrized
determinant over these inputs can be evaluated in polynomial time. He
uses these to define a series of algorithms that he conjectures might
yield progressively better randomized approximation algorithms for the
(commutative) permanent. Chien, Rasmussen, and Sinclair \cite{CRS03}
show that efficient algorithms to compute the determinant over
Clifford algebras of polynomial dimension would yield efficient
approximation algorithms for the permanent. Moore and Russell
\cite{MR09} provide evidence that Barvinok's approach might not work,
but their results also imply that computing the symmetrized or
standard noncommutative determinant over polynomial dimensional matrix
algebras would give a good estimator for the permanent.

\subsection*{Our results}

\begin{enumerate}
\item We provide evidence that the noncommutative determinant is hard.
  We show that if the noncommutative determinant\footnote{We haven't
    defined this polynomial formally yet and there are, in fact, many
    ways of doing it. See Section \ref{section_prelim}.} can be
  computed by a small noncommutative arithmetic circuit, then so can
  the noncommutative permanent and therefore, the commutative
  permanent has small commutative arithmetic circuits. This is in
  marked contrast to the commutative case, where the determinant is
  known to be computable by polynomial sized circuits, but the
  permanent is not known (or expected) to have subexponential sized
  arithmetic circuits.

\item We show that computing the noncommutative determinant over
  matrix algebras of polynomial dimension is as hard as computing the
  commutative permanent. We also derive as a consequence that
  computing the $n\times n$ permanent over nonnegative rationals is
  polynomial-time reducible to computing the noncommutative
  determinant over Clifford algebras of $\poly(n)$ dimension.

  This points to the intractability of carrying over Barvinok's
  approach for large dimension, and also to the possibility that the
  approach of Chien, Rasmussen, and Sinclair might be computationally
  infeasible.

  We stress that our result here is potentially more useful than a
  noncommutative circuit lower bound for the determinant, from an
  algorithmic point of view. For, an arithmetic circuit lower bound
  result would not rule out the possibility of a polynomial-time
  algorithm for the noncommutative determinant over even polynomial
  dimension matrix algebras. For example, Barvinok's algorithm
  \cite{B} computes the symmetrized determinant over constant
  dimensional matrix algebras, whereas any algebraic branching program
  that computes the symmetrized determinant over constant dimensional
  matrix algebras must be of exponential size \cite{CS04}.

\end{enumerate}

\section{Preliminaries}
\label{section_prelim}
For any set of variables $X$, let $\fx{X}$ denote the ring of
\emph{noncommuting} polynomials over $X$. Let $\mon(X)$ denote the set
of noncommutative monomials over $X$; given $d\in\naturals$, let
$\mon_d(X)$ denote the monomials over $X$ of degree exactly $d$. For
$f\in\fx{X}$ and $m\in\mon(X)$, we will denote by $f(m)$ the
coefficient of the monomial $m$ in $f$. 

For any ring $R$, we use $M_n(R)$ to denote the ring of $n\times n$
matrices with entries from $R$. 

Fix $X = \bra{x_1,x_2,\ldots,x_m}$ and $Y = \bra{y_1,y_2,\ldots,y_n}$,
two disjoint sets of variables. Given $f\in\fx{X}$, matrices
$A_i \in M_k(\fx{Y})$ for $1\leq i\leq m$, and $i_0,j_0\in [k]$, we use
$f(A_1,A_2,\ldots,A_m)(i_0,j_0)$ to denote the $(i_0,j_0)$th entry of
the matrix $f(A_1,A_2,\ldots,A_m)\in M_k(\fx{Y})$.

\subsection{Noncommutative determinants and permanents}

Given $X = \setcond{x_{ij}}{1\leq i,j\leq n}$ for $n\in\naturals$, we
define the $n\times n$ noncommutative determinant and permanent
polynomials over the set of variables $X$.  By fixing the order of
multiplication in each monomial of the \emph{commutative}
determinant/permanent polynomials in different ways, one can obtain
many different reasonable ways of defining the $n\times n$
noncommutative determinant and permanent, and indeed many of these
definitions have been studied (see \cite{A96}, which surveys various
flavours of the noncommutative determinant). The most straightforward
definitions are those of the \emph{Cayley determinant} and
\emph{Cayley permanent} -- we will denote these by $\cdet_n(X)$ and
$\cperm_n(X)$ respectively -- which use the row order of
multiplication. That is, 
\begin{align*}
	\cdet_n(X) &= \sum_{\sigma\in S_n} \sgn(\sigma)\ 
	x_{1,\sigma(1)}\cdot x_{2,\sigma(2)}\cdots x_{n,\sigma(n)},\\
	\cperm_n(X) &= \sum_{\sigma\in S_n} x_{1,\sigma(1)}\cdot
	x_{2,\sigma(2)}\cdots x_{n,\sigma(n)}. 
\end{align*}

We also define the \emph{Moore determinant} and \emph{Moore
permanent} -- denoted $\mdet_n(X)$ and $\mperm_n(X)$ respectively -- by
ordering the variables in each monomial using the cyclic order of the
corresponding permutation. Given $\sigma\in S_n$, we write it as a
product of disjoint cycles $(n^\sigma_{11}\cdots
n^\sigma_{1l_1})(n^\sigma_{21}\cdots
n^{\sigma}_{2l_2})\cdots(n^{\sigma}_{r1}\cdots n^{\sigma}_{rl_r})$
such that $\forall i\in [r]$ and $j\in [l_r]\setminus\bra{1}$, we have
$n^{\sigma}_{i1} < n^{\sigma}_{ij}$ and $n^\sigma_{11} > n^\sigma_{21} > \cdots >
n^\sigma_{r1}$. The Moore determinant and permanent are defined as
\begin{align*}
  \mdet_n(X) &= \sum_{\sigma\in S_n} \sgn(\sigma)\
  x_{n^{\sigma}_{11},n^\sigma_{12}}\cdots x_{n^{\sigma}_{1l_r},
    n^\sigma_{11}}\cdots x_{n^\sigma_{r1},n^\sigma_{r2}}\cdots
  x_{n^{\sigma}_{rl_r},n^\sigma_{r1}},\\
  \mperm_n(X) &= \sum_{\sigma\in S_n}
  x_{n^{\sigma}_{11},n^\sigma_{12}}\cdots x_{n^{\sigma}_{1l_r},
    n^\sigma_{11}}\cdots x_{n^\sigma_{r1},n^\sigma_{r2}}\cdots
  x_{n^{\sigma}_{rl_r},n^\sigma_{r1}}.
\end{align*}

In the setting of a field $\field$ of characteristic $0$, Alexander
Barvinok, in \cite{B}, has studied another variant of the noncommutative
determinant called the \emph{symmetrized determinant}, which is
denoted $\sdet_n(X)$. It is defined as follows: 
\begin{align*}
	\sdet_n(X) &= \frac{1}{n!} \sum_{\sigma,\tau\in S_n}
	\sgn(\sigma)\sgn(\tau)\ 
	x_{\tau(1),\sigma(1)}x_{\tau(2),\sigma(2)}\cdots
	x_{\tau(n),\sigma(n)}.
\end{align*}

Barvinok shows that, for any fixed dimensional associative algebra
$\mathcal{A}$ over $\field$ of characteristic zero, there is a
polynomial-time algorithm which, on input an $n\times n$ matrix $A$
with entries from $\mathcal{A}$, computes $\sdet_n(A)$. It is not
known whether such algorithms exist for the Cayley or Moore
determinants.

\subsection{Models for noncommutative arithmetic computation}

A \emph{noncommutative arithmetic circuit} $C$ over a field 
$\field$ is defined as follows: $C$ is a directed acyclic graph and every 
leaf of the graph is labelled with either an input variable  from the
set of variables $X$ or an element from $\field$.  Every
internal node is labelled by either ($+$) or ($\times$) -- meaning
that it is either an addition or multiplication gate respectively --
and has fanin two. Since we are working over noncommutative domains,
we will assume that each multiplication gate has a designated
left child and a designated right child. Each gate of the circuit
computes a polynomial in $\fx{X}$ in the natural way: the polynomials
computed at the leaves are the polynomials labelling the leaves; the
polynomial computed at an internal node labelled by $+$ (resp.
$\times$) is the sum (resp.  product in left-to-right order) of the
polynomials computed at its children. The polynomial computed by $C$
is the polynomial computed at a designated output node of the circuit.

We also recall the definition of an Algebraic Branching Program (ABP)
computing a noncommutative polynomial in $\fx{X}$ (\cite{N91}, \cite{RS05}).
An ABP is a directed acyclic graph with one vertex of in-degree zero,
which is called the \emph{source}, and one vertex of out-degree zero, which is
called the \emph{sink}. The vertices of the graph are partitioned into
levels numbered $0,1,\ldots,d$. Edges may only go from level $i$ to
level $i + 1$ for $i = 0,1,\ldots,d-1$. The source is the only vertex
at level $0$ and the sink is the only vertex at level $d$. Each edge
is labeled with a homogeneous linear form in the variables $X$.  The
size of the ABP is the number of vertices.

The ABP computes a degree $d$ homogeneous polynomial $f\in\fx{X}$ as
follows. Fix any path $\gamma$ from source to sink with edges
$e_1,e_2,\ldots,e_d$, where $e_i$ is the edge from level $i-1$ to
level $i$, and let $\ell_i$ denote the linear form labelling edge $e_i$.
We denote by $f_\gamma$ the homogeneous degree $d$ polynomial
$\ell_1\cdot\ell_2\cdots\ell_d$ (note that the order of multiplication
is important). The polynomial $f$ computed by the ABP is simply 
\[
f = \sum_{\gamma\in\mathcal{P}} f_\gamma
\]
where $\mathcal{P}$ is the set of all paths from the source to the
sink.

We will also consider a slight variant of the above definition where
we allow multiple sources and sinks. A \emph{multi-output ABP $P$} is
defined exactly as above, except that we allow multiple sources at
level $0$ and multiple sinks at level $d$. For each source $s$ and
sink $t$, an ABP $P_{st}$ may be obtained from $P$ by removing all
sources other than $s$ and sinks other than $t$. Let $S =
\bra{s_1,s_2,\ldots,s_a}$ and $T=\bra{t_1,t_2,\ldots,t_b}$ denote the
sets of sources and sinks respectively in $P$. The ABP $P$ will be
thought of as computing the $ab$ many polynomials computed by the ABPs
$P_{s_i,t_j}$. More precisely, the output of the ABP $P$ is an
$a\times b$ matrix $A$ with entries from $\fx{X}$ such that the
$(i,j)$th entry of $A$, denoted $A(i,j)$, is the polynomial computed
by the ABP $P_{s_i,t_j}$. It is easily seen that we can write $A$ as
$\sum_{m\in\mon_d(X)} A_m m$, where $A_m\in\field^{a\times b}$; we
will call $A_m$ the \emph{coefficient matrix} of the monomial $m$ in
the matrix $A$.

\section{The Hadamard Product}\label{hadsec}

A key notion we require for all our reductions is the Hadamard product
of polynomials that was introduced in \cite{AJS09}. 

\begin{definition}
  Given polynomials $f,g\in\fx{X}$, their \emph{Hadamard product} $h =
  f\circ g$ is defined as follows: $h$ is the unique polynomial in
  $\fx{X}$ such that for any monomial $m\in\mon(X)$, the coefficient
  $h(m) = f(m)\cdot g(m)$.
\end{definition}

In \cite[Theorem 5]{AJS09} we show that given a noncommutative circuit
for polynomial $f$ and an ABP for polynomial $g$ we can efficiently
compute a noncommutative circuit for their Hadamard product $f\circ
g$. However, the construction we present in \cite{AJS09} modifies the
noncommutative circuit for the polynomial $f$. Hence, it will not work
if we are allowed only black-box access to $f$, which we require for
certain applications in this paper.

Suppose we have an efficient \emph{black-box} algorithm for evaluating
the polynomial $f\in\fx{X}$, where the variables in $X$ take values in
some matrix algebra (say, $n\times n$ matrices over a field $\field$).
Furthermore, suppose we have an \emph{explicit} ABP for the polynomial
$g$. \emph{Ideally,} we would like to obtain an efficient algorithm
for computing their Hadamard product $f\circ g$ over the same matrix
algebra.

However, what we can show is that we can put together the ABP and the
black-box algorithm for $f$ to obtain an efficient algorithm that
computes $f\circ g$ over $\field$. This turns out to be sufficient to
prove all our hardness results for the different noncommutative
determinants.


\begin{theorem}\label{thm_gen_redn}
  Fix $d\in\naturals$. Let $Z = \bra{z_1,z_2,\ldots,z_m}$ be a set of
  noncommuting variables and $g \in\fx{Z}$ be a homogeneous polynomial
  of degree $d$ such that $g$ is computed by an ABP $P$ of size $S$.
  Then, there exist matrices $A_1,A_2,\ldots,A_n\in M_S(\field)$ such
  that for any homogeneous polynomial $f\in\fx{Z}$ of degree $d$,
  $f\circ g = f(A_1z_1,A_2z_2,\ldots,A_nz_n)(1,S)$.  Moreover, given
  the ABP $P$, the matrices $A_1,A_2,\ldots,A_n$ can be computed in
  time polynomial in the size of the description of $P$.
\end{theorem}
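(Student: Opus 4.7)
The plan is to build the matrices $A_1,\ldots,A_m$ directly from the edge-labeling of the ABP and exploit the fact that scalars commute with noncommuting variables. First I would relabel the vertices of $P$ as $1,2,\ldots,S$ so that the source is vertex $1$ and the sink is vertex $S$ (this is legal since the source and sink are unique, and all other vertices can be labeled arbitrarily). Define an $S\times S$ matrix $M$ over $\fx{Z}$ by setting $M_{u,v}$ to be the linear form labeling the edge $u\to v$ in $P$, and $0$ if no such edge exists. Since each entry is a homogeneous linear form in $Z$ with coefficients in $\field$, we can write $M=\sum_{i=1}^m A_i z_i$ for uniquely determined coefficient matrices $A_1,\ldots,A_m\in M_S(\field)$, each readable off the ABP in polynomial time.

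Next I would recall the standard path interpretation of $M^d$: the $(u,v)$-entry of $M^d$ equals the sum, over all length-$d$ paths from $u$ to $v$, of the products (in left-to-right order) of the edge labels along the path. Because $P$ is layered with source only at level $0$ and sink only at level $d$, the $(1,S)$-entry of $M^d$ is precisely $g$. Expanding the product $M^d=\bigl(\sum_i A_i z_i\bigr)^d$ in $M_S(\fx{Z})$ and using the fact that scalar matrices commute with elements of $\fx{Z}$ (the entries of $A_i$ are in $\field$, hence commute with every $z_j$), one obtains the identity
\[
M^d \;=\; \sum_{(i_1,\ldots,i_d)\in[m]^d} \bigl(A_{i_1}A_{i_2}\cdots A_{i_d}\bigr)\,z_{i_1}z_{i_2}\cdots z_{i_d}.
\]
Reading off the $(1,S)$-entry gives $g(m)=\bigl(A_{i_1}\cdots A_{i_d}\bigr)_{1,S}$ for every monomial $m=z_{i_1}\cdots z_{i_d}\in\mon_d(Z)$.

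Now let $f=\sum_{m\in\mon_d(Z)} f(m)\cdot m$ be an arbitrary homogeneous polynomial of degree $d$. Substituting $z_i\mapsto A_i z_i$ and again commuting scalar matrices past variables, each monomial expands as $A_{i_1}z_{i_1}\cdots A_{i_d}z_{i_d}=(A_{i_1}\cdots A_{i_d})\,(z_{i_1}\cdots z_{i_d})$, so
\[
f(A_1 z_1,\ldots,A_m z_m) \;=\; \sum_{m=z_{i_1}\cdots z_{i_d}} f(m)\,\bigl(A_{i_1}\cdots A_{i_d}\bigr)\,m.
\]
Taking the $(1,S)$-entry of both sides and using the identity $g(m)=(A_{i_1}\cdots A_{i_d})_{1,S}$ obtained above yields $\sum_m f(m)\,g(m)\,m = f\circ g$, as required. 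The efficient computability of the $A_i$ from $P$ is immediate.

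The construction is essentially forced, and I do not expect a serious obstacle: the only point that needs a little care is the commutation step — writing a scalar matrix past a noncommuting variable — which works precisely because the $A_i$ have entries in $\field$ and $\field$ is central in $\fx{Z}$. Once this observation is in place, the rest is bookkeeping.
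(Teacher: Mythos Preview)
Your proposal is correct and follows essentially the same approach as the paper: define $A_i$ by reading off the coefficient of $z_i$ on each edge of the ABP (your introduction of $M=\sum_i A_i z_i$ is just a convenient repackaging of this), observe that $(A_{i_1}\cdots A_{i_d})_{1,S}=g(z_{i_1}\cdots z_{i_d})$ via the path interpretation, and then expand $f(A_1z_1,\ldots,A_mz_m)$ using the fact that the scalar entries of the $A_i$ commute past the $z_j$. The paper's proof is the same computation, just without explicitly naming $M$ or $M^d$.
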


\begin{proof}
  Let the vertices of $P$ be named $1,2,\ldots,S$ where $1$ is the
  source of the ABP and $S$ is the sink. Define the matrices
  $A_1,A_2,\ldots,A_n\in M_S(\field)$ as follows: $A_i(k,l)$ is the
  coefficient of the variable $z_i$ in the linear form labelling the
  edge that goes from vertex $k$ to vertex $l$; if there is no such
  edge, the entry $A_i(k,l) = 0$. For any monomial $m =
  z_{i_1}z_{i_2}\cdots z_{i_d}\in \mon_d(Z)$, let $A_m$ denote the
  matrix $A_{i_1}A_{i_2}\cdots A_{i_d}$. We see that
  \begin{align*}
    f(A_1z_1,A_2z_2,\ldots,A_nz_n) &= \sum_{i_1,i_2,\ldots,i_d\in [n]}
    f(z_{i_1}z_{i_2}\cdots z_{i_d})
    (A_{i_1}z_{i_1})(A_{i_2}z_{i_2})\cdots (A_{i_d}z_{i_d})\\
    &= \sum_{i_1,i_2,\ldots,i_d\in [n]} f(z_{i_1}z_{i_2}\cdots
    z_{i_d})
    (A_{i_1}A_{i_2}\cdots A_{i_d})(z_{i_1}z_{i_2}\cdots z_{i_d})\\
    &= \sum_{m\in \mon_d(Z)} f(m) A_m m
  \end{align*}
  Note that the coefficient $g(m)$ of a monomial
  $m=z_{i_1}z_{i_2}\cdots z_{i_d}$ in $g$ is just $A_m(1,S) =
  \sum_{k_1,k_2,\ldots,k_{d-1}\in
    [S]}\prod_{j=1}^dA_{i_j}(k_{j-1},k_{j})$, where $k_0 = 1$ and $k_d
  = S$.  Putting the above observations together, we see that
  $f(A_1z_1,A_2z_2,\ldots,A_nz_n)(1,S) = \sum_{m\in \mon_d(Z)} f(m)
  A_m(1,S) m = \sum_{m\in\mon_d(Z)}f(m)g(m) m = f\circ g$. Since the
  entries of the matrices $A_1,A_2,\ldots,A_n$ can be read off from
  the labels of $P$, it can be seen that $A_1,A_2,\ldots,A_n$ can be
  computed in polynomial time given the ABP $P$. This completes the
  proof.
\end{proof}

\begin{remark}
  We note that the matrices $A_i$ in the statement of Theorem
  \ref{thm_gen_redn} can actually be computed from the ABP even more
  efficiently, say, in uniform $\AC^0$.
\end{remark}

The following corollary is immediate.

\begin{corollary}{\rm\cite{AJS09}}\label{hadcor1}
  Given a noncommutative circuit of size $S'$ for $f\in\fx{Z}$ and an
  ABP of size $S$ for $g\in \fx{Z}$, we can efficiently compute a
  noncommutative circuit of size $O(S'S^3)$ for $f\circ g$.
\end{corollary}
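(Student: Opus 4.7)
The plan is to simulate the circuit for $f$ entrywise over the matrix algebra $M_S(\fx{Z})$. First, I would apply Theorem~\ref{thm_gen_redn} to the ABP $P$ for $g$ to obtain the matrices $A_1,\ldots,A_n \in M_S(\field)$ in polynomial time, together with the identity $f\circ g = f(A_1 z_1,\ldots,A_n z_n)(1,S)$, which the theorem guarantees when $f$ is homogeneous of degree $d$. The identity also continues to hold for an arbitrary $f$: because $P$ is leveled, each $A_k$ sends level $j$ to level $j+1$, so any product $A_{i_1}\cdots A_{i_{d'}}$ has zero $(1,S)$-entry unless $d'=d$. Hence only the degree-$d$ homogeneous component $f_d$ of $f$ contributes to the $(1,S)$-entry, and $f_d\circ g = f\circ g$ since $g$ is itself homogeneous of degree $d$.

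The construction then walks through the circuit for $f$ in topological order and replaces each gate $v$ (computing some $p_v$) by an $S\times S$ array of new gates whose $(i,j)$-th entry computes the $(i,j)$-th entry of $p_v(A_1 z_1,\ldots,A_n z_n) \in M_S(\fx{Z})$. A leaf labelled $z_k$ expands to the matrix $A_k z_k$, each of whose $S^2$ entries is a scalar multiple of $z_k$; a constant leaf $c$ expands to $cI$. An addition gate is realized entrywise, costing $S^2$ new addition gates; a multiplication gate is realized by the naive noncommutative $S\times S$ matrix-multiplication subcircuit, which respects the required left-to-right order of factors and uses $O(S^3)$ arithmetic gates. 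The output of the new circuit is the $(1,S)$-th gate of the array associated with the output of $f$. Each of the $S'$ original gates contributes $O(S^3)$ new gates, yielding the claimed bound of $O(S'S^3)$.

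The only point that calls for a moment of care—and it is no real obstacle—is justifying correctness when $f$ is not homogeneous, which is precisely what the leveled-ABP remark in the first paragraph handles. Everything else is a mechanical bookkeeping exercise: matrix addition and noncommutative matrix multiplication lift entrywise from operations on polynomials in $\fx{Z}$, and polynomial-time constructibility of the $A_i$ (and therefore of the whole resulting circuit) is immediate from Theorem~\ref{thm_gen_redn}.
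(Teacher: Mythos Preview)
Your proposal is correct and is exactly the ``immediate'' argument the paper has in mind: apply Theorem~\ref{thm_gen_redn} and then simulate the circuit for $f$ entrywise over $M_S(\fx{Z})$, paying $O(S^3)$ per gate for naive matrix multiplication. Your remark about the non-homogeneous case is a nice extra piece of care that the paper does not spell out; it is indeed handled by the leveled structure of the ABP exactly as you say.
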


The next corollary is the more useful version for this paper.

\begin{corollary}\label{hadcor2}
  Let $Z=\bra{z_1,z_2,\ldots,z_n}$. Suppose $\mathcal{A}$ is a
  polynomial-time algorithm for computing a homogeneous degree $d$
  polynomial $f\in\fx{Z}$ for matrix inputs from
  $M_S(\field)$.\footnote{The statement can be generalized to any
    unital algebra $\mathcal{A}$ in place of the field $\field$.}
  Given as input an ABP $P$, with $S$ nodes, computing a homogeneous
  degree $d$ polynomial $g\in\fx{Z}$, and scalars
  $a_1,a_2,\ldots,a_n\in \field$, we can compute $f\circ
  g(a_1,a_2,\ldots,a_n)$ in polynomial time.
\end{corollary}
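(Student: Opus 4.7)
The plan is to combine Theorem \ref{thm_gen_redn} with the hypothesized black-box algorithm $\mathcal{A}$ in the natural way. First I would apply Theorem \ref{thm_gen_redn} to the ABP $P$ to construct, in polynomial time, matrices $A_1,A_2,\ldots,A_n \in M_S(\field)$ with the property that, as polynomials in $\fx{Z}$,
\[
(f\circ g)(z_1,\ldots,z_n) \;=\; f(A_1 z_1,\, A_2 z_2,\, \ldots,\, A_n z_n)(1,S).
\]
This step is already done for us by the previous theorem; no new work is required.

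Next I would substitute the scalars $z_i = a_i$ into both sides of this polynomial identity. On the left-hand side we get the desired value $(f\circ g)(a_1,\ldots,a_n) \in \field$. On the right-hand side the substitution produces $f(a_1 A_1, a_2 A_2, \ldots, a_n A_n)(1,S)$, since scalars commute with matrices and so the $i$th matrix argument simply becomes $a_i A_i \in M_S(\field)$. Thus
\[
(f\circ g)(a_1,\ldots,a_n) \;=\; f(a_1 A_1,\, a_2 A_2,\, \ldots,\, a_n A_n)(1,S).
\]

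Finally, to compute the right-hand side, I would invoke the polynomial-time algorithm $\mathcal{A}$ on the matrix inputs $a_1 A_1,\ldots,a_n A_n \in M_S(\field)$; this produces the matrix $f(a_1 A_1,\ldots,a_n A_n) \in M_S(\field)$, from which we simply read off the $(1,S)$ entry. Each of the three stages — constructing the $A_i$, forming the scaled inputs $a_i A_i$, and running $\mathcal{A}$ — takes time polynomial in $n$, $S$, and the bit lengths of the inputs, so the overall procedure is polynomial time.

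There is really no substantive obstacle here: the content is entirely in Theorem \ref{thm_gen_redn}, and the corollary just observes that the identity provided there is an identity of polynomials, so it is preserved under scalar substitution, and the resulting right-hand side is exactly a matrix evaluation of $f$ to which the assumed black-box algorithm applies. The mild point worth noting is that the statement assumes $\mathcal{A}$ works for inputs from $M_S(\field)$ where $S$ is the same as the size of the ABP; this matches precisely the dimension of the matrices produced by Theorem \ref{thm_gen_redn}, so the two components fit together without any extra padding or adjustment.
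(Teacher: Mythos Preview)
Your proposal is correct and follows essentially the same approach as the paper: compute the matrices $A_i$ from Theorem~\ref{thm_gen_redn}, feed the scaled matrices $a_iA_i$ to the black-box algorithm $\mathcal{A}$, and read off the $(1,S)$ entry. The paper's proof is terser but identical in content.
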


\begin{proof}
  We first compute matrices $A_1,A_2,\ldots,A_n$, described in the
  Theorem~\ref{thm_gen_redn}, in time polynomial in the description of
  the ABP $P$. Then we invoke the given algorithm $\mathcal{A}$ on
  input $(A_1a_1,A_2a_2,\ldots,A_na_n)$ to obtain as output an
  $S\times S$ matrix whose $(1,S)^{th}$ entry contains $f\circ
  g(a_1,a_2,\ldots,a_n)$. Clearly, the simulation runs in polynomial
  time.
\end{proof}

\section{The hardness of the Cayley determinant}\label{section_cayley}

We consider polynomials over an arbitrary field $\field$ (for the
algorithmic results $\field$ is either rational numbers or a finite
field). The main result of this section is that if there is a
polynomial-time algorithm to compute the $2n\times 2n$ Cayley
determinant over inputs from $M_S(\field)$ for $S=c\cdot n^2$ (for a
suitable constant $c$) then there is a polynomial-time algorithm to
compute the $n\times n$ permanent over $\field$.

Throughout this section let $X$ denote $\setcond{x_{ij}}{1\leq i,j\leq
  2n}$, and $Y$ denote $\setcond{y_{ij}}{1\leq i,j\leq n}$. Our aim is
to show that if there is a polynomial-time algorithm for computing
$\cdet_{2n}(X)$ where $x_{ij}$ takes values in $M_S(\field)$ then
there is a polynomial-time algorithm that computes $\cperm_{n}(Y)$
where $y_{ij}$ takes values in $\field$. 

The $2n\times 2n$ determinant has $2n!$ many signed monomials of
degree $2n$ of the form $x_{1,\sigma(1)}x_{2,\sigma(2)}\cdots
x_{2n,\sigma(2n)}$ for $\sigma\in S_{2n}$. We will identify $n!$ of
these monomials, all of which have the same sign. More precisely, we
will design a small ABP with which we will be able to pick out these
$n!$ monomials of the same sign.

We now define these $n!$ many permutations from $S_{2n}$ which have
the same sign and the corresponding monomials of $\cdet_{2n}$ that can
be picked out by a small ABP.

\begin{definition}\label{rhopi}
  Let $n\in\naturals$. For each permutation $\pi\in S_n$, we define a
  permutation $\rho(\pi)$ in $S_{2n}$, called the \emph{interleaving}
of $\pi$, as follows:
\[
\rho(\pi)(i) = \left\{
\begin{array}{lr}
	\pi(\frac{i+1}{2}), & \textrm{if $i$ is odd,}\\
	n+\pi(\frac{i}{2}), & \textrm{if $i$ is even.}
\end{array}\right.
\]
That is, the elements $\rho(\pi)(1),\rho(\pi)(2),\cdots,\rho(\pi)(2n)$
are simply
$\pi(1),(n+\pi(1)),\pi(2),(n+\pi(2)),\cdots,\pi(n),(n+\pi(n))$.  
\end{definition}

The following lemma states a crucial property of the permutation
$\rho(\pi)$.

\begin{lemma}\label{lemma_rho_pi} 
  The sign of the permutation $\rho(\pi)$ is independent of
  $\pi$. More precisely, for every $\pi\in S_n$, we have
  $\sgn(\rho(\pi)) = \sgn(\rho(1_n))$, where $1_n$ denotes the
  identity permutation in $S_n$.
\end{lemma}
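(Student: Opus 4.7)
The plan is to realise $\rho(\pi)$ as a composition of two permutations of $S_{2n}$: a ``shuffling'' permutation $\alpha$ that does not depend on $\pi$, and a ``doubled'' version of $\pi$ that has sign $1$.

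First I would define $\alpha \in S_{2n}$ by $\alpha(2k-1) = k$ and $\alpha(2k) = n+k$ for $1 \leq k \leq n$; observe that $\alpha = \rho(1_n)$ by Definition \ref{rhopi}. Next I would define $\tilde{\pi} \in S_{2n}$ to be the ``direct sum of $\pi$ with itself,'' namely $\tilde\pi(k) = \pi(k)$ for $k \in [n]$ and $\tilde\pi(n+k) = n + \pi(k)$ for $k \in [n]$.

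The key identity to verify is $\rho(\pi) = \tilde\pi \circ \alpha$. This is a direct computation: for odd $i = 2k-1$, we have $\tilde\pi(\alpha(2k-1)) = \tilde\pi(k) = \pi(k) = \rho(\pi)(2k-1)$, and for even $i = 2k$, we have $\tilde\pi(\alpha(2k)) = \tilde\pi(n+k) = n + \pi(k) = \rho(\pi)(2k)$, matching Definition \ref{rhopi} on the nose.

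From this decomposition the result is immediate. The permutation $\tilde\pi$ preserves the two blocks $[n]$ and $\{n+1,\ldots,2n\}$ and acts as $\pi$ on each, so its cycle structure is two disjoint copies of the cycle structure of $\pi$. Hence $\sgn(\tilde\pi) = \sgn(\pi)^2 = 1$, and therefore
\[
\sgn(\rho(\pi)) \;=\; \sgn(\tilde\pi)\cdot \sgn(\alpha) \;=\; \sgn(\alpha) \;=\; \sgn(\rho(1_n)),
\]
as claimed. There is no real obstacle here: the only nontrivial step is spotting the decomposition $\rho(\pi) = \tilde\pi \circ \alpha$, after which everything reduces to the elementary fact that a block-diagonal permutation of two equal copies is always even.
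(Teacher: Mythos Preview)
Your proof is correct and is essentially identical to the paper's own argument: the paper defines $\pi_2$ exactly as your $\tilde\pi$, observes $\sgn(\pi_2)=\sgn(\pi)^2=1$, and verifies the same decomposition $\rho(\pi)=\rho(1_n)\,\pi_2$ (with the left-to-right convention, i.e.\ your $\tilde\pi\circ\alpha$). The only difference is notational.
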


\begin{proof}
  For each $\pi\in S_n$ we can define the permutation $\pi_2\in
  S_{2n}$ as $\pi_2(i)=\pi(i)$ for $1\leq i\leq n$ and
  $\pi_2(n+j)=n+\pi(j)$ for $1\leq j\leq n$. It is easy to verify that
  $\sgn(\pi_2)=\sgn(\pi)^2=1$ for every $\pi\in S_n$. To see this we
  write $\pi_2$ as a product of disjoint cycles and notice that every
  cycle occurs an even number of times. Furthermore, we can check that
  $\rho(\pi)=\rho(1_n)\pi_2$, where we evaluate products of
  permutations from left to right. Hence it follows that
  $\sgn(\rho(\pi))=\sgn(\rho(1_n))\sgn(\pi_2)=\sgn(\rho(1_n))$.
\end{proof}

We will denote by $\rho_0$ the permutation $\rho(1_n)$, where
$1_n$ denotes the identity permutation in $S_n$. 

For $\sigma\in S_{2n}$, we will denote by $m_\sigma$ the monomial
$x_{1,\sigma(1)}x_{2,\sigma(2)}\cdots x_{2n,\sigma(2n)}\in\mon(X)$.
For $\sigma,\tau\in S_{2n}$, we will denote the monomial
$x_{\sigma(1),\tau(1)}x_{\sigma(2),\tau(2)}\cdots
x_{\sigma(2n),\tau(2n)}$ by $m_{\sigma,\tau}$. 

In the next lemma we show that there is an ABP that will filter out
monomials that are not of the form $m_{\rho(\pi)}$ from among the
$m_\sigma$.
%
%
\begin{lemma}\label{lemma_abp_checker}
  There is an ABP $P$ of size $O(n^2)$ and width $n$ that computes a
  homogeneous polynomial $F\in\fx{X}$ of degree $2n$ such that for any
  $\sigma,\tau\in S_{2n}$,
  \begin{itemize}
  \item $F(m_\sigma) = 1$ if $\sigma = \rho(\pi)$ for some $\pi\in
    S_n$, and $0$ otherwise.
  \item $F(m_{\sigma,\tau}) = 0$ unless $\sigma = 1_{2n}$, where
    $1_{2n}$ denotes the identity permutation in $S_{2n}$.
  \end{itemize}
  Moreover, the above ABP $P$ can be computed in time $\poly(n)$.
\end{lemma}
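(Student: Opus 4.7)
The plan is to build a simple layered ABP of length $2n$ whose layer widths alternate between $1$ and $n$, so that the "width-$n$ layer" acts as a single register that stores the column index $\pi(k)$ chosen at the odd step $2k-1$, just long enough to commit to the column index $n+\pi(k)$ at the subsequent even step $2k$. Concretely, the ABP has levels $0,1,\ldots,2n$; level $2k$ (for $0\le k\le n$) contains a single vertex, and level $2k-1$ (for $1\le k\le n$) contains $n$ vertices, indexed by $j\in [n]$. This gives $(n+1)+n\cdot n = n^2+n+1 = O(n^2)$ vertices and width $n$.

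Edge labels encode the interleaving constraint. From the single vertex at level $2k-2$ to vertex $j$ at level $2k-1$ the edge is labelled $x_{2k-1,j}$; from vertex $j$ at level $2k-1$ to the single vertex at level $2k$ it is labelled $x_{2k,\,n+j}$. A source-to-sink path is then specified by a tuple $(j_1,\ldots,j_n)\in [n]^n$, and the monomial read off along it is
\[
x_{1,j_1}\,x_{2,n+j_1}\,x_{3,j_2}\,x_{4,n+j_2}\cdots x_{2n-1,j_n}\,x_{2n,n+j_n}.
\]
The polynomial $F$ computed by the ABP is the sum of these monomials over $(j_1,\ldots,j_n)\in [n]^n$; each monomial appears with coefficient $1$ (distinct tuples give distinct monomials because the odd-position variables already reveal the $j_k$'s).

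To verify the two claimed properties, I read off the monomial support of $F$. Every monomial in $F$ has row indices exactly $1,2,\ldots,2n$ in order, so $F(m_{\sigma,\tau})\ne 0$ immediately forces $\sigma=1_{2n}$. For $\sigma\in S_{2n}$, $F(m_\sigma)$ is nonzero iff $\sigma(2k-1)\in [n]$ and $\sigma(2k)=n+\sigma(2k-1)$ for each $k$; since $\sigma$ is a bijection, the $\sigma(2k-1)$ are distinct, so $\pi\colon k\mapsto \sigma(2k-1)$ lies in $S_n$ and $\sigma=\rho(\pi)$. Conversely each $\rho(\pi)$ comes from a unique path, yielding coefficient $1$. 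Finally, the edge labels are read off directly from the description above, so the ABP can be written down in $\poly(n)$ time. There is no real obstacle in this proof: the only observation worth naming is that the interleaved shape of $\rho(\pi)$ is precisely what an alternating $1,n,1,n,\ldots$ width ABP can enforce with a one-symbol register.
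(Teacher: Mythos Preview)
Your construction is exactly the same ABP as the paper's: single nodes at even levels, $n$ nodes at odd levels, with edge labels $x_{2k-1,j}$ and $x_{2k,n+j}$. Your verification of the two bullet points is in fact more explicit than the paper's, which simply asserts that the stated properties are easy to check.
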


\begin{proof}
  The ABP is essentially just a finite automaton over the alphabet $X$
  with the following properties: for input monomials of the form
  $m_\sigma$ it accepts only those monomials that are of the form
  $m_{\rho(\pi)}$. Further, for input monomials of the form
  $m_{\sigma,\tau}$ it accepts only those monomials of the form
  $m_{1_{2n},\tau}$. We give the formal description of this ABP $P$
  below.

  The ABP $P$ contains $2n+1$ layers, labelled $\bra{0,1,\ldots,2n}$.
  For each even $i\in\bra{0,1,\ldots,2n}$, there is exactly one node
  $q_i$ at level $i$; for each odd $i\in\bra{0,1,\ldots,2n}$, there
  are $n$ nodes $p_{i,1},p_{i,2},\ldots,p_{i,n}$ at level $i$. We now
  describe the edges of $P$: for each even $i\in\bra{0,1,\ldots,2n-2}$
  and $j\in [n]$, there is an edge from $q_i$ to $p_{i+1,j}$ labelled
  $x_{i+1,j}$; for each odd $i\in\bra{0,1,\ldots,2n}$ and $j\in [n]$,
  there is an edge from $p_{i,j}$ to $q_{i+1}$ labelled $x_{i+1,n+j}$.

  It is easy to see that $P$ as defined above satisfies the
  requirements of the statement of the lemma. It is also clear that
  the ABP $P$ can be computed in polynomial time.
\end{proof}

Note that the ABP $P$ of Lemma~\ref{lemma_abp_checker} can in fact be
constructed in uniform $\AC^0$.

\begin{remark}
  For this section we require only the first part of
  Lemma~\ref{lemma_abp_checker}. The second part of
  Lemma~\ref{lemma_abp_checker} is used in Section~\ref{symsec}.
\end{remark}

We are now ready to prove that if there is a small noncommutative
arithmetic circuit that computes the Cayley determinant polynomial,
then there is a small noncommutative arithmetic circuit that computes
the Cayley permanent polynomial.

\begin{theorem}\label{thm_cayley_ckt}
  For any $n\in\naturals$, if there is a circuit $C$ of size $s$
  computing $\cdet_{2n}(X)$, then there is a circuit $C'$ of size
  polynomial in $s$ and $n$ that computes $\cperm_{n}(Y)$.
\end{theorem}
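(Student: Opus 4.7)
The plan is to invoke Corollary \ref{hadcor1} to take the Hadamard product of the given circuit for $\cdet_{2n}(X)$ with the polynomial $F$ computed by the filtering ABP $P$ from Lemma \ref{lemma_abp_checker}, and then to specialize variables to recover the Cayley permanent.

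First, I would form the polynomial $h = \cdet_{2n} \circ F$. Since $F(m_\sigma) = 1$ precisely when $\sigma = \rho(\pi)$ for some $\pi \in S_n$ and $F(m_\sigma) = 0$ otherwise, only the $n!$ monomials $m_{\rho(\pi)}$ survive in $h$, each with coefficient $\sgn(\rho(\pi)) \cdot 1$. By Lemma \ref{lemma_rho_pi}, $\sgn(\rho(\pi)) = \sgn(\rho_0)$ for every $\pi$, so
\[
h \;=\; \sgn(\rho_0) \sum_{\pi \in S_n} x_{1,\pi(1)}\, x_{2,\,n+\pi(1)}\, x_{3,\pi(2)}\, x_{4,\,n+\pi(2)} \cdots x_{2n-1,\pi(n)}\, x_{2n,\,n+\pi(n)}.
\]
Corollary \ref{hadcor1}, applied with $f = \cdet_{2n}$ (a circuit of size $s$) and $g = F$ (an ABP of size $O(n^2)$), yields a noncommutative arithmetic circuit for $h$ of size $O(s \cdot n^6)$, which is polynomial in $s$ and $n$.

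Next, I would apply a variable substitution to $h$ that collapses each surviving monomial onto a monomial of $\cperm_n(Y)$. Specifically, relabel the leaves of the circuit as follows: for $1 \le i,j \le n$, replace $x_{2i-1,j}$ by $y_{i,j}$ and replace $x_{2i,\,n+j}$ by the constant $1$; set every remaining $x_{a,b}$ leaf to $0$. Under this substitution, the monomial indexed by $\pi$ becomes $y_{1,\pi(1)} \cdot 1 \cdot y_{2,\pi(2)} \cdot 1 \cdots y_{n,\pi(n)} \cdot 1$, so the substituted polynomial equals $\sgn(\rho_0) \cdot \cperm_n(Y)$. Multiplying the output by the scalar $\sgn(\rho_0) \in \{\pm 1\}$ gives $\cperm_n(Y)$, and the resulting circuit $C'$ has size at most $O(s \cdot n^6) + O(1)$.

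The main thing to verify carefully is that the Hadamard product of $\cdet_{2n}$ with the ABP polynomial $F$ really does isolate exactly one same-sign copy of each permanent monomial; this is where Lemma \ref{lemma_rho_pi} is essential, since without the uniformity of $\sgn(\rho(\pi))$ the Hadamard product would produce a signed sum that does not resemble the permanent. The substitution step and the size bookkeeping are then routine. I do not anticipate a genuine obstacle here; the work is really in having set up the interleaving $\rho$, the filtering ABP, and the Hadamard product machinery already.
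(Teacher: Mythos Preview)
Your proposal is correct and follows essentially the same approach as the paper: take the Hadamard product $\cdet_{2n}\circ F$ via Corollary~\ref{hadcor1}, use Lemmas~\ref{lemma_rho_pi} and~\ref{lemma_abp_checker} to identify the result as $\sgn(\rho_0)\sum_{\pi}m_{\rho(\pi)}$, substitute to collapse each $m_{\rho(\pi)}$ to $y_{1,\pi(1)}\cdots y_{n,\pi(n)}$, and multiply by $\sgn(\rho_0)$. The only cosmetic difference is that the paper substitutes the irrelevant variables by $1$ rather than $0$, which is immaterial since those variables do not occur in $\cdet_{2n}\circ F$.
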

\begin{proof}
  Assuming the existence of the circuit $C$ as stated above, by
  Corollary~\ref{hadcor1}, there is a noncommutative arithmetic
  circuit $C''$ of size $\poly(s,n)$ that computes the polynomial $F''
  = \cdet_{2n}\circ F$, where $F$ is the polynomial referred to in
  Lemma \ref{lemma_abp_checker}. For any monomial $m$, if $m\neq
  m_\sigma$ for any $\sigma\in S_{2n}$, then $\cdet_{2n}(m) = 0$ and
  hence, in this case, $F''(m) = 0$; moreover, for $m = m_\sigma$, we
  have $F(m) = 1$ if $\sigma = \rho(\pi)$ for some $\pi\in S_n$, and
  $0$ otherwise.  Hence, we see that
  \[
  F''(X) = \sum_{\pi\in S_n}\sgn(\rho(\pi)) m_{\rho(\pi)} =
  \sgn(\rho_0)\left(\sum_{\pi\in S_n}m_{\rho(\pi)}\right)
  \]
  where the last equality follows from Lemma \ref{lemma_rho_pi}.
	
  Let $C'$ be the circuit obtained from $C''$ by substituting $x_{ij}$
  with $y_{\frac{1+i}{2}, j}$ if $i$ is odd and $j\in [n]$, and by $1$
  if $i$ is even or $j\notin [n]$, and by multiplying the output of
  the resulting circuit by $\sgn(\rho_0)$. Let $F'$ denote the
  polynomial computed by $C'$.  Then, we have
  \[
  F'(X) = \sum_{\pi\in S_n} m'_{\rho(\pi)}
  \]
  where $m'_{\rho(\pi)}$ denotes the monomial obtained from
  $m_{\rho(\pi)}$ after the substitution. It can be checked that for
  any $\pi\in S_n$, the monomial $m'_{\rho(\pi)} =
  y_{1,\pi(1)}y_{2,\pi(2)}\cdots y_{n,\pi(n)}$. Hence, the polynomial
  $F'$ computed by $C'$ in indeed $\cperm_n(Y)$. It is easily seen
  that the size of $C'$ is $\poly(s,n)$.
\end{proof}

We now show that evaluating the polynomial $\cdet_{2n}$ over
$M_S(\field)$, for $S=c\cdot n^2$ for suitable $c>0$, is at least as
hard as evaluating the permanent over $\field$.

\begin{theorem}\label{thm_cayley_algm}
  If there is a polynomial-time algorithm $\mathcal{A}$ that computes
  the $2n\times 2n$ Cayley determinant of matrices with entries in
  $M_S(\field)$, for $S=c\cdot n^2$ for suitable $c>0$, then there is
  a polynomial-time algorithm that computes the $n\times n$ permanent
  over $\field$.
\end{theorem}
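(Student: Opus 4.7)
The plan is to mimic the proof of Theorem \ref{thm_cayley_ckt} in the black-box algorithmic setting, replacing the use of Corollary \ref{hadcor1} with Corollary \ref{hadcor2}. Given a matrix $B = (b_{ij}) \in \field^{n \times n}$ whose permanent we wish to compute, I will produce a suitable scalar input to the Hadamard product $\cdet_{2n} \circ F$ (with $F$ the filter polynomial from Lemma \ref{lemma_abp_checker}) that returns $\sgn(\rho_0) \cdot \cperm_n(B)$, and evaluate that Hadamard product using Corollary \ref{hadcor2}.

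First I invoke Lemma \ref{lemma_abp_checker} to obtain, in polynomial time, an ABP $P$ of size $S = O(n^2)$ computing the homogeneous degree-$2n$ polynomial $F \in \fx{X}$. Set $c$ so that $S \leq c\cdot n^2$, matching the assumed dimension for the algorithm $\mathcal{A}$. By Corollary \ref{hadcor2} (with $f = \cdet_{2n}$ and $g = F$), the black-box algorithm $\mathcal{A}$ together with $P$ yields a polynomial-time procedure that, on any scalar input $(a_{ij})_{1 \leq i,j \leq 2n} \in \field^{(2n)^2}$, outputs the value $(\cdet_{2n} \circ F)(a_{11},\ldots,a_{2n,2n})$.

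Next, choose the scalars that simulate the substitution used at the end of Theorem \ref{thm_cayley_ckt}: set $a_{ij} = b_{(i+1)/2,\, j}$ when $i$ is odd and $j \in [n]$, set $a_{ij} = 1$ when $i$ is even and $j \in \{n+1,\ldots,2n\}$, and set $a_{ij} = 0$ otherwise. By the computation in the proof of Theorem \ref{thm_cayley_ckt} together with Lemma \ref{lemma_rho_pi},
\[
\cdet_{2n} \circ F \;=\; \sgn(\rho_0)\sum_{\pi \in S_n} m_{\rho(\pi)},
\]
and every surviving monomial $m_{\rho(\pi)}$ evaluates under the chosen substitution to $b_{1,\pi(1)} \cdot 1 \cdot b_{2,\pi(2)} \cdot 1 \cdots b_{n,\pi(n)} \cdot 1 = \prod_{i} b_{i,\pi(i)}$. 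Summing over $\pi \in S_n$ and multiplying through by $\sgn(\rho_0)^{-1} = \sgn(\rho_0)$ produces $\cperm_n(B)$.

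Since constructing $P$, computing the scalars $a_{ij}$, invoking $\mathcal{A}$ once on these inputs, and multiplying by the sign $\sgn(\rho_0)$ are each polynomial-time operations, the overall reduction is polynomial-time. There is essentially no obstacle here beyond confirming that the black-box version of the Hadamard-product construction (Corollary \ref{hadcor2}) applies with the right parameters; the bulk of the work was already carried out in Theorem \ref{thm_cayley_ckt}, and Corollary \ref{hadcor2} is precisely designed to lift that identity from a circuit-construction reduction to a function-evaluation reduction over the matrix algebra $M_S(\field)$.
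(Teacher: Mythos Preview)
Your proof is correct and follows essentially the same approach as the paper: invoke Corollary~\ref{hadcor2} with the ABP of Lemma~\ref{lemma_abp_checker} to evaluate $\cdet_{2n}\circ F$ on scalar inputs, use the substitution from Theorem~\ref{thm_cayley_ckt} to extract $\sgn(\rho_0)\cperm_n(B)$, and correct the sign. The only cosmetic difference is that you set the irrelevant variables (those with $i$ odd, $j>n$ or $i$ even, $j\le n$) to $0$ rather than $1$; since those variables do not occur in any monomial $m_{\rho(\pi)}$ of $\cdet_{2n}\circ F$, this is immaterial.
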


\begin{proof}
  This is an easy consequence of Corollary~\ref{hadcor2}. Consider the
  algorithm given by Corollary~\ref{hadcor2} for computing
  $\cdet_{2n}\circ F$ over the field $\field$, where the ABP in 
  Corollary~\ref{hadcor2} is the ABP of Lemma~\ref{lemma_abp_checker}
  computing $F$. 

  In order to evaluate the permanent over inputs $a_{ij}, 1\leq
  i,j\leq n$ we will substitute $x_{2i-1,j}=a_{ij}$ for $1\leq i,
  j\leq n$ and we put $x_{i,j}=1$ when $i$ is even or $j>n$. As in the
  proof of Theorem~\ref{thm_cayley_ckt} it follows that for
  this substitution the algorithm computing $\cdet_{2n}\circ F$ will
  output $\sgn(\rho_0)\cperm_n(a_{11},\ldots,a_{nn})$. Since
  $\sgn(\rho_0)$ can be easily computed, we have a polynomial-time
  algorithm for computing the $n\times n$ permanent over $\field$.
\end{proof}

\begin{remark}
  The above result has a stronger consequence: for any fixed
  $\varepsilon > 0$, if there is a polynomial-time algorithm that
  computes the $m\times m$ Cayley determinant over
  $M_{m^\varepsilon}(\field)$, then there is a polynomial-time
  algorithm that computes $\Omega(m^{\varepsilon/2})\times
  \Omega(m^{\varepsilon/2})$ permanents over $\field$, hence implying
  that permanent over $\field$ is polynomial-time computable.
\end{remark}

\section{The Cayley determinant over Clifford algebras}

We now consider the complexity of computing the determinant over real
Clifford algebras of polynomially large dimension. We show via a
polynomial-time reduction that computing the permanent over rationals
is reducible to this problem. Indeed, by inspecting our result we can
observe that even \emph{approximating} the determinant over such
Clifford algebras would yield similar approximation algorithms for the
permanent over the reals.

We first define the basic notions in the theory of Clifford algebras.
A thorough treatment can be found in \cite{LS09}. Fix $m\in\naturals$.
The (real) \emph{Clifford algebra} $CL'_m$ is a $2^m$-dimensional
vector space over $\real$ with basis elements of the form
$e_{i_1}e_{i_2}e_{i_3}\cdots e_{i_{k}}$ where $i_1< i_2 < i_3\cdots
<i_{k}$ are elements from $[m]$. Multiplication between elements of
the basis is defined by the following rules: $e_i^2 = 1$ and $e_ie_j =
-e_je_i$ for distinct $i,j\in [m]$; this is extended linearly to all
pairs of elements from the Clifford algebra. Given $i_1 < i_2 < \cdots
< i_k$ from $[m]$, we denote by $e_S$ the basis element
$e_{i_1}e_{i_2}\cdots e_{i_k}$, where $S=\bra{i_1,i_2,\ldots,i_k}$.
Each element of the Clifford algebra is uniquely expressible as
$\sum_{S\subseteq [m]}c_Se_S$, where $c_S\in\real$ for each $S$. (Note
that $e_\emptyset$ and $1$ both refer to the multiplicative identity
of the algebra.) An \emph{idempotent} of the Clifford algebra is an
element $e$ such that $e^2 = e$. Given $h = \sum_{S\subseteq
  [m]}c_Se_S$ in $CL'_m$, we define its \emph{norm} $\abs{h}$ to be
$\sqrt{\sum_{S\subseteq [m]}c_S^2}$.

The subset of basis elements $\{e_S\mid S$ has even cardinality$\}$
generates a strict subalgebra of $CL'_m$. We will denote this
subalgebra by $CL_m$. This is the algebra of interest to us. The term
`Clifford algebra' will henceforth refer to $CL_m$ for some
$m\in\naturals$.

Chien, Rasmussen, and Sinclair \cite{CRS03} have shown that a
polynomial-time algorithm that, when given as input an $n\times n$
matrix $B$ with entries from $CL_m$ for $m = 2\log n + 2$, computes
$\abs{\cdet_n(B)}^2$ can be used to design a randomized polynomial
time algorithm to approximate the $0$-$1$ permanent (over
$\mathbb{Q}$).

In this section, we prove that if there is a polynomial-time algorithm
to compute either $\abs{\cdet_n(B)}^2$ or $\cdet_n(B)$, then the
permanent (over inputs from $\real$) can actually be computed in
polynomial time. For an $n\times n$ real matrix $A$, let
$\myperm_n(A)$ denote the permanent of $A$.

\begin{remark}
	In a sense, our result in this section should not be surprising.  We
	have already proved (in Theorem~\ref{thm_cayley_algm}) that
	computing the determinant over matrix algebras is at least as hard
	as computing the permanent. Also, it is known that Clifford algebras
	of polynomial dimension are isomorphic to matrix algebras of
	polynomial dimension (see, for example, \cite[Chapter 5]{LS09}).
	However, in this section we actually give an explicit
	polynomial-time reduction showing that computing the permanent over
	the reals is reducible to computing either $\abs{\cdet_n(B)}^2$ or
	$\cdet_n(B)$ where the entries of $B$ are from the Clifford algebra
	$CL_m$.
\end{remark}

Suppose we wish to compute the permanent of an $n\times n$ matrix with
entries from $\real$. W.l.o.g., we assume that $n = 2^\ell$ for some
$\ell\in\naturals$. Let $m$ denote $5\ell$. The next lemma is about
the existence of certain elements in the algebra $CL_m$ useful for
the reduction.

\begin{lemma}\label{lemma_clifford_existence}
  Let $n,\ell,m$ be as above. Then, there exist
  $h_1,h_2,\ldots,h_n,h'_1h'_2,\ldots,h'_n\in CL_m$ and an idempotent
  $e\in CL_m$ such that:
  \begin{itemize}
  \item For all $j$, $h_jh'_j = e$.
  \item For all $j\neq k$, $h_jh'_k = 0$.
  \item $\abs{e}^2 = \frac{1}{2^\ell}$.
  \end{itemize}
  Moreover, the elements $h_1,h_2,\ldots,h_n,h'_1,h'_2,\ldots,h'_n$
  and $e$ can be constructed in time $\poly(n)$.
\end{lemma}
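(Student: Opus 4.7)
The plan is to explicitly exhibit commuting ``parity'' elements built from the first $4\ell$ generators of $CL_m$, assemble them into a family of $n=2^\ell$ orthogonal minimal idempotents, and then use the remaining $\ell$ generators to build ``flip'' operators that permute these idempotents. The desired $h_j$ and $h'_j$ will be obtained by sandwiching flip operators with the distinguished idempotent $e=\pi_0$.

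First I would define, for each $i\in[\ell]$, the element $w_i=e_{4i-3}e_{4i-2}e_{4i-1}e_{4i}$. A direct computation using $e_a^2=1$ and $e_ae_b=-e_be_a$ gives $w_i^2=1$, and since $w_i$ and $w_j$ for $i\neq j$ have disjoint supports and even length, they commute. For each $b=(b_1,\ldots,b_\ell)\in\bra{0,1}^\ell$, put $\pi_b=\prod_{i=1}^\ell \frac{1+(-1)^{b_i}w_i}{2}$; these are pairwise orthogonal commuting idempotents with $\sum_b\pi_b=1$. Expanding $\pi_0$ yields a sum of the $2^\ell$ pairwise distinct basis elements $\prod_{i\in T}w_i$ for $T\subseteq[\ell]$, each with coefficient $1/2^\ell$, so $\abs{\pi_0}^2=2^\ell\cdot(1/2^\ell)^2=1/2^\ell=1/n$. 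Set $e=\pi_0$.

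Next I would use the remaining $\ell$ indices to define the ``flip'' operators $v_i=e_{4i-3}e_{4\ell+i}$ for $i\in[\ell]$. A short sign count gives $v_i^2=-1$, that the $v_i$ pairwise commute, and that $v_iw_i=-w_iv_i$ while $v_iw_j=w_jv_i$ for $i\neq j$. For $b\in\bra{0,1}^\ell$ set $V_b=\prod_{i=1}^\ell v_i^{b_i}$; then conjugation by $V_b$ negates $w_j$ precisely when $b_j=1$, so $V_b\bigl(\tfrac{1+w_j}{2}\bigr)V_b^{-1}=\tfrac{1+(-1)^{b_j}w_j}{2}$, and therefore $V_b\pi_0V_b^{-1}=\pi_b$, i.e.\ $V_b\pi_0=\pi_bV_b$.

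Finally, identifying $j\in[n]$ with a bit string $b(j)\in\bra{0,1}^\ell$, I would set $h_j=\pi_0V_{b(j)}^{-1}$ and $h'_j=V_{b(j)}\pi_0$. Then $h_jh'_j=\pi_0V_{b(j)}^{-1}V_{b(j)}\pi_0=\pi_0^2=e$, and for $j\neq k$, since $v_i^2=-1$, the product $V_{b(j)}^{-1}V_{b(k)}$ is a nonzero scalar multiple of $V_{b(j)\oplus b(k)}$, so $h_jh'_k$ is a scalar multiple of $\pi_0V_{b(j)\oplus b(k)}\pi_0=\pi_{b(j)\oplus b(k)}\pi_0\cdot V_{b(j)\oplus b(k)}=0$ because $b(j)\oplus b(k)\neq 0$ and the $\pi_b$ are mutually orthogonal. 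Each of $e$, $h_j$, $h'_j$ is a sum of at most $2^{\ell+1}$ basis elements $e_S$ of even cardinality with rational coefficients, hence lies in $CL_m$ and is computable in time $\poly(n)$. The main obstacle is purely sign bookkeeping -- verifying the (anti)commutation relations among the $w_i$ and $v_j$ and tracking signs through $V_b^{-1}V_{b'}$ -- but no deeper algebraic machinery beyond the defining relations of $CL'_m$ is required.
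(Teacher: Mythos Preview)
Your proof is correct and follows essentially the same approach as the paper: build $e$ as a product of $\ell$ commuting idempotents $(1+w_i)/2$ from disjoint blocks of generators, and use an anticommuting ``flip'' element in each block to move between the $2^\ell$ resulting orthogonal idempotents. The paper picks, in each block of five generators, two overlapping size-$4$ subsets $S_{i,0},S_{i,1}$ so that $e_{S_{i,1}}$ plays the role of your $w_i$ and $e_{S_{i,0}}$ the role of your $v_i$ (with $e_{S_{i,0}}^2=+1$ rather than your $v_i^2=-1$), and it packages $h_j,h'_j$ as products $\prod_i g_{i,b_i}$, $\prod_i g'_{i,b_i}$ rather than your $\pi_0 V_{b}^{-1}$, $V_b\pi_0$; unwinding the definitions shows these are the same elements up to signs, so the two constructions coincide in all but cosmetic detail.
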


We defer the proof of the above lemma and first prove the main result
of this section.

\begin{theorem}\label{thm_cayley_clifford_main}
  Let $n,\ell,m$ be as above. There is a polynomial-time algorithm
  which, when given any matrix $A\in M_n(\real)$, computes a $B\in
  M_{2n}(CL_m)$ such that $\abs{\cdet_{2n}(B)}^2 =
  \frac{\myperm_n(A)^2}{2^\ell}$.
\end{theorem}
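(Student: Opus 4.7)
The plan is to leverage the elements $h_1,\ldots,h_n, h'_1,\ldots,h'_n, e \in CL_m$ provided by Lemma~\ref{lemma_clifford_existence} to encode the permanent inside the Cayley determinant, using the same ``interleaving trick'' as in Section~\ref{section_cayley} but now implemented algebraically inside $CL_m$ rather than via an auxiliary ABP.

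Concretely, I would define $B \in M_{2n}(CL_m)$ by
\begin{align*}
B_{2i-1,\, j} &= a_{ij}\,h_j &\text{for } i,j\in[n],\\
B_{2i,\, n+k} &= h'_k &\text{for } i,k\in[n],
\end{align*}
and all other entries equal to $0$. With all ``cross'' entries vanishing, the only permutations $\sigma \in S_{2n}$ that can contribute to $\cdet_{2n}(B)$ are those that send odd rows into $[n]$ and even rows into $\{n+1,\ldots,2n\}$; these are parametrized by pairs $(\pi,\pi')\in S_n\times S_n$ via $\sigma(2i-1)=\pi(i)$, $\sigma(2i)=n+\pi'(i)$. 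Note that $\sigma = \rho(\pi)$ precisely when $\pi=\pi'$.

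The next step is to compute the Clifford-algebra monomial for each such $(\pi,\pi')$. Since the scalars $a_{i,\pi(i)}$ are real and central, the monomial equals $\left(\prod_i a_{i,\pi(i)}\right) \cdot h_{\pi(1)}h'_{\pi'(1)}h_{\pi(2)}h'_{\pi'(2)}\cdots h_{\pi(n)}h'_{\pi'(n)}$. By associativity I may group the Clifford product as $\prod_{i=1}^n (h_{\pi(i)} h'_{\pi'(i)})$: if $\pi(i)=\pi'(i)$ for every $i$ the product telescopes to $e^n = e$ (using $e^2=e$), whereas if $\pi(i)\ne\pi'(i)$ for any $i$ then that factor is $0$ and the whole product vanishes. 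Combined with Lemma~\ref{lemma_rho_pi} (every interleaving has sign $\sgn(\rho_0)$), this gives
\[
\cdet_{2n}(B) \;=\; \sgn(\rho_0)\, e \sum_{\pi\in S_n}\prod_{i=1}^n a_{i,\pi(i)} \;=\; \sgn(\rho_0)\, e \cdot \myperm_n(A).
\]

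Finally, since $\myperm_n(A)\in\real$ is a central scalar and $\sgn(\rho_0)\in\{\pm 1\}$, the norm squared factors as $|\cdet_{2n}(B)|^2 = \myperm_n(A)^2 \cdot |e|^2 = \myperm_n(A)^2 / 2^\ell$, which is what we want; the whole construction is clearly polynomial time given the efficient construction of the $h_j,h'_k,e$ in Lemma~\ref{lemma_clifford_existence}. The only subtle step is the telescoping argument: the elements $h_j,h'_k$ are not assumed to satisfy any identity beyond $h_j h'_k = \delta_{jk} e$, so I need to rely on associativity (rather than, say, $e h_j = h_j$) to collapse the product to $e^n$; the key observation making this work is that the pairs $h_{\pi(i)}h'_{\pi'(i)}$ appear in consecutive positions in the left-to-right product, so grouping them by consecutive pairs is legitimate and immediately yields either $e^n$ or a factor of $0$.
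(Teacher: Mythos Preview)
Your construction of $B$ and the pairwise-grouping argument are exactly the paper's proof: the paper defines $B$ identically, records the key identity $B(i,j)B(i+1,k)=A(\tfrac{i+1}{2},j)\,e$ iff $k=n+j$ (and $0$ otherwise), and from this claims that only $\sigma=\rho(\pi)$ survive, yielding $\cdet_{2n}(B)=\sgn(\rho_0)\,\myperm_n(A)\,e$. Your write-up is in fact slightly more explicit than the paper's (which leaves the ``only $\rho(\pi)$ survive'' step as an easy claim), and your observation that one only needs associativity and $e^2=e$---not $eh_j=h_j$---to collapse $e^n$ to $e$ is exactly the point.
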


\begin{proof}
  The matrix $B$ will be the following: for any odd $i\in [2n]$ and
  any $j\in [2n]$, set $B(i,j)$ -- the $(i,j)$th entry of $B$ -- to be
	$A(\frac{i+1}{2},j)h_j$ if $j\leq n$ and $0$ if $j > n$; for any
	even $i\in [2n]$ and $j\in [2n]$, set $B(i,j)$ to be $h'_{j-n}$ if
	$j>n$ and $0$ otherwise. Clearly, $B$ can be computed in polynomial
	time given $A$. Note the following property of $B$: for any odd
	$i\in[2n]$ and $j,k\in [2n]$
\[
B(i,j)B(i+1,k)=\left\{
	\begin{array}{lr}
		A(\frac{i+1}{2},j)e & \textrm{if $j\leq n$ and $k = n+j$,}\\
		0 & \textrm{otherwise.}
	\end{array}\right.
\]
Here $e$ denotes the idempotent from
Lemma~\ref{lemma_clifford_existence}. The following claim is easy to
see.

\begin{claim}
  For any permutation $\sigma\in S_{2n}$, the product
  $\prod_{i=1}^{2n}B(i,\sigma(i))=(\prod_{i=1}^nA(i,\pi(i)))e$ if
  $\sigma = \rho(\pi)$ for some $\pi\in S_n$ and it is $0$ otherwise.
\end{claim}

  Let us consider $\cdet_{2n}(B)$. We have:
  \begin{align*}
    \cdet_{2n}(B) &= \sum_{\sigma\in S_{2n}} \sgn(\sigma)
    B(1,\sigma(1))\cdot B(2,\sigma(2)) \cdots B(2n,\sigma(2n))\\
    &= \sum_{\pi\in S_n}\sgn(\rho(\pi))(\prod_{i=1}^n A(i,\pi(i)))e\\
    &= \sgn(\rho_0)\myperm_n(A)e
  \end{align*}

  Thus, we see that $\abs{\cdet_{2n}(B)}^2 = \myperm_n(A)^2\abs{e}^2 =
  \frac{\myperm_n(A)^2}{2^\ell}$.
\end{proof}
We have the following easy consequence of the above theorem.

\begin{corollary}\label{corollary_clifford_algm}
  Fix any $\varepsilon> 0$, and suppose there is a polynomial-time
  algorithm that computes $\abs{\cdet_n(B)}^2$ on input an $n\times n$
  matrix $B$ with entries from $CL_m$ for $m=\varepsilon\log n$. Then
  there is a polynomial-time algorithm that computes the $n\times n$
  permanent of matrices with nonnegative rational entries.
\end{corollary}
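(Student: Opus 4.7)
The plan is to compose Theorem~\ref{thm_cayley_clifford_main} with an identity-padding trick at the level of the Clifford-valued matrix, so that the Clifford algebra dimension produced by that theorem fits inside the budget $\varepsilon\log N'$ allowed by the hypothesized algorithm.

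Given $A \in M_n(\rational_{\geq 0})$, I would first pad $A$ with an identity block so that $n$ is a power of $2$; since $\myperm$ is multiplicative on block-diagonal matrices with identity blocks, this preserves $\myperm_n(A)$. Theorem~\ref{thm_cayley_clifford_main} then produces, in polynomial time, a matrix $B \in M_{2n}(CL_{5\log n})$ with $\abs{\cdet_{2n}(B)}^2 = \myperm_n(A)^2 / n$. Because $5\log n$ may exceed $\varepsilon\log(2n)$ when $\varepsilon < 5$, I cannot feed $B$ directly into the algorithm. I would therefore pad \emph{the Clifford matrix itself} to form $B' := \mathrm{diag}(B, I_{N'-2n}) \in M_{N'}(CL_{5\log n})$, where $N' := \max(2n, \lceil n^{5/\varepsilon}\rceil)$ is polynomial in $n$ for fixed $\varepsilon$. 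By construction $5\log n \leq \varepsilon\log N'$, so every entry of $B'$ lies in $CL_{5\log n} \subseteq CL_{\varepsilon\log N'}$, and $B'$ is within the algorithm's scope.

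The key fact to verify is that identity padding preserves the noncommutative Cayley determinant, i.e., $\cdet_{N'}(B') = \cdet_{2n}(B)$. In the expansion $\sum_{\sigma \in S_{N'}} \sgn(\sigma)\prod_{i=1}^{N'} B'(i,\sigma(i))$, any $\sigma$ that does not fix every index in $\{2n+1,\ldots,N'\}$ hits an off-diagonal zero entry of the identity block and contributes $0$. The surviving $\sigma$ restrict to permutations of $\{1,\ldots,2n\}$; for these, $\sgn(\sigma)$ in $S_{N'}$ equals the sign of its restriction in $S_{2n}$, and the row-ordered product is the corresponding product in $B$ followed by trailing $1$'s from the identity block. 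Hence $\abs{\cdet_{N'}(B')}^2 = \myperm_n(A)^2 / n$.

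To conclude, I would run the algorithm on $B'$ to obtain $\myperm_n(A)^2 / n \in \rational_{\geq 0}$ in polynomial time, multiply by $n$ to recover $\myperm_n(A)^2$, and extract the nonnegative square root by computing integer square roots of the numerator and denominator in lowest terms, which is a polynomial-time operation. The main technical point is the identity-padding identity for $\cdet$; beyond that, the proof is routine parameter chasing to check that $N'$ stays polynomial in $n$ for any fixed $\varepsilon > 0$.
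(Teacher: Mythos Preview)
Your proof is correct and follows essentially the same approach as the paper. The paper's argument is terser---it simply observes that an algorithm for $\abs{\cdet_n(\cdot)}^2$ over $CL_{\varepsilon\log n}$ can be used to compute $\abs{\cdet_{n^{\varepsilon/5}}(\cdot)}^2$ over $CL_{5\log n^{\varepsilon/5}}$, leaving the identity-padding of the Clifford matrix and the square-root extraction implicit; you have made both steps explicit.
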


\begin{proof}
	The statement directly follows from Theorem
	\ref{thm_cayley_clifford_main} for $m=\myceil{5\log n}$. To prove
  hardness for $m = \varepsilon\log n$, we note that a polynomial-time
  algorithm to compute $\abs{\cdet_n(B)}^2$ over $CL_{\varepsilon\log
    n}$ can be used to compute $\abs{\cdet_{n^{\varepsilon/5}}(B)}^2$
  over $CL_{5\log n^{\varepsilon/5}}$ in polynomial time.
\end{proof}

A $\delta$-approximation algorithm $\mathcal{A}$ for a function
$f:\Sigma^*\longrightarrow \mathbb{Q}$ is an algorithm such 
that for each $x\in\Sigma^*$ 
\[
(1-\delta)f(x)\leq \mathcal{A}(x) \leq (1+\delta)f(x).
\]

Our reduction from computing the permanent for nonnegative entries to
computing $|\cdet_n(B)|^2$ actually yields an approximation preserving
reduction. We formalize this in the next corollary.

\begin{corollary}\label{cor_approx}
  Fix any $\delta> 0$ and $\varepsilon>0$. Suppose there is a
  polynomial-time $\delta$-approximation algorithm for the function
  that on input an $n\times n$ matrix $B$ with entries from $CL_m$ for
  $m=\varepsilon\log n$ takes the value $\abs{\cdet_n(B)}^2$. Then
  there is a polynomial-time $\delta$-approximation algorithm for the
  $n\times n$ permanent with nonnegative rational entries.
\end{corollary}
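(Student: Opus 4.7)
The plan is to reuse the reduction of Theorem~\ref{thm_cayley_clifford_main} verbatim and verify that the assumed multiplicative approximation to $\abs{\cdet_{\cdot}(\cdot)}^2$ survives the algebraic post-processing. Given $A \in M_n(\rational_{\geq 0})$, I would first pad $A$ along the diagonal with ones until its order is a power of $2$ (this leaves $\myperm_n(A)$ unchanged). Applying Theorem~\ref{thm_cayley_clifford_main} then produces, in polynomial time, a matrix $B \in M_{2n}(CL_{5\ell})$ satisfying $\abs{\cdet_{2n}(B)}^2 = \myperm_n(A)^2/2^{\ell}$, where $\ell = \log n$.

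Next I would match the Clifford dimension to the hypothesis. The assumed approximation oracle expects an $N\times N$ matrix over $CL_{\varepsilon \log N}$, whereas $B$ has entries in $CL_{5\ell}$. Using the embedding already used in Corollary~\ref{corollary_clifford_algm}, I would place $B$ as the top-left $2n \times 2n$ block of an $N \times N$ block-diagonal matrix with an identity block in the lower-right, choosing $N = (2n)^{\lceil 5/\varepsilon\rceil}$ so that $\varepsilon \log N \geq 5\ell$ and $CL_{5\ell}$ sits inside $CL_{\varepsilon \log N}$ as a subalgebra. Both $N$ and the matrix entry size $2^{\varepsilon \log N}$ are $\poly(n)$, and the enlarged Cayley determinant equals $\cdet_{2n}(B)$.

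Let $\tilde v$ denote the oracle's output on this input. By hypothesis,
\[
(1-\delta)\,\myperm_n(A)^2/2^{\ell} \;\leq\; \tilde v \;\leq\; (1+\delta)\,\myperm_n(A)^2/2^{\ell},
\]
so I would return $\sqrt{2^{\ell}\,\tilde v}$ (computed rationally to sufficiently high precision). Using $\sqrt{1-\delta}\geq 1-\delta$ and $\sqrt{1+\delta}\leq 1+\delta$ on $[0,1]$ together with the fact that $\myperm_n(A) \geq 0$, this value lies in $[(1-\delta)\myperm_n(A),\,(1+\delta)\myperm_n(A)]$. Any numerical slack from computing the square root can be absorbed into the gap between $\sqrt{1\pm\delta}$ and $1\pm\delta$.

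The reason nothing really obstructs this argument is that the underlying reduction is already purely multiplicative: the identity $\abs{\cdet_{2n}(B)}^2 = \myperm_n(A)^2/2^{\ell}$ is exact, with a fixed explicit rational scaling factor and no additive error. The only care-point -- scarcely a difficulty -- is confirming that embedding into a larger Clifford algebra preserves the identity, which is immediate since the defining relations of $CL_{5\ell}$ continue to hold inside $CL_{\varepsilon \log N}$.
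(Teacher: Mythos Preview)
Your proof is correct and follows exactly the approach the paper intends: the paper gives no explicit proof of this corollary, treating it as an immediate consequence of the exact multiplicative identity in Theorem~\ref{thm_cayley_clifford_main} together with the dimension-matching trick already used in Corollary~\ref{corollary_clifford_algm}. Your write-up supplies the details the paper omits --- in particular the block-diagonal padding to reach the required matrix size/Clifford dimension and the square-root step with the elementary bounds $\sqrt{1\pm\delta}$ versus $1\pm\delta$ --- all of which are handled correctly.
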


We now prove Lemma \ref{lemma_clifford_existence}.

\begin{proof}[Proof of Lemma \ref{lemma_clifford_existence}]
  Let $e_1,e_2,\ldots,e_m$ denote the generators of $CL'_m$. Partition
  the set $[m]$ into $\ell$ subsets of size $5$ as follows: set $S_i =
  \setcond{5(i-1)+j}{j\in [5]}$ for each $i\in[\ell]$. For each $i\in
  [\ell]$, let $S_{i,0} = \bra{5(i-1)+1,5(i-1)+2,5(i-1)+3,5(i-1)+5}$
  and $S_{i,1} = \bra{5(i-1)+2,5(i-1)+3,5(i-1)+4,5(i-1)+5}$.

  Using the fact that $e_i^2=1$ and $e_ie_j=-e_je_i$ for $i\neq j$ it
  easily follows that for any two \emph{disjoint} sets $S,T\subseteq
  [m]$ such that $|S|,|T|$ are even, we have $e_Se_T = e_Te_S$. Hence,
  the elements $e_{S_{i,b_1}}$ and $e_{S_{j,b_2}}$ commute for $i\neq
  j$ and any $b_1,b_2\in\bra{0,1}$. Furthermore, for all $i\in [\ell]$
  and $b\in \bra{0,1}$ we have $e_{S_{i,b}}^2 = 1$. Also, we have
  $e_{S_{i,0}}e_{S_{i,1}} = -e_{S_{i,1}}e_{S_{i,0}}$. Finally, notice
  that $e_{S_{i,b}}$ for $1\leq i\leq \ell$ and $b\in\{0,1\}$ are all
  elements of $CL_m$.

  
  For $i\in [\ell]$ and $b\in\bra{0,1}$, set $g_{i,0} =
  \frac{1+e_{S_{i,1}}}{2}$ and $g_{i,1} =
  \frac{e_{S_{i,0}}(1-e_{S_{i,1}})}{2}$. Also, set $g'_{i,0} =
  g_{i,0}$ and $g'_{i,1} = \frac{e_{S_{i,0}}(1+e_{S_{i,1}})}{2}$.
  Notice that $g_{i,0}^2=g_{i,0}$. We also note an additional relation
  $e_{S_{i,0}}(1-e_{S_{i,1}})=(1+e_{S_{i,1}})e_{S_{i,0}}$. Using these
  we can easily derive the following crucial properties of these
  elements of $CL_m$.

  \begin{itemize}
  \item For each $i\in [\ell]$ and $b\in \bra{0,1}$, $g_{i,b}g'_{i,b}
    = g_{i,0}$.
  \item For each $i\in [\ell]$ and $b\in \bra{0,1}$,
    $g_{i,b}g'_{i,1-b} = 0$.
  \item For $i_1\neq i_2$ and any $b_1,b_2\in\bra{0,1}$, the elements
    $g_{i_1,b_1}$ and $g'_{i_2,b_2}$ commute.
  \end{itemize}

  Finally, we define $h_j, h_j'$ for a fixed $j\in [n]$. Let
  $b_1b_2\ldots b_\ell$ be the binary representation of the integer
  $j-1$ (recall that $n = 2^\ell$). We define $h_j =
  g_{1,b_1}g_{2,b_2}\cdots g_{\ell,b_\ell}$ and $h'_j =
  g'_{1,b_1}g'_{2,b_2}\cdots g'_{\ell,b_\ell}$. Also, we define $e$ to
  be $g_{1,0}g_{2,0}\cdots g_{\ell,0}$, which is the same as $h_1$ and
  $h'_1$.

  We now prove that the $h_j,h'_j$ ($j\in [n]$) and $e$ satisfy the
  properties claimed in the statement of the lemma. Fix any $j\in [n]$
  and let $b_1b_2\ldots b_\ell$ be the binary representation of $j-1$.
  We have
  \begin{align*}
    h_jh'_j &= g_{1,b_1}g_{2,b_2}\cdots
    g_{\ell,b_\ell}g'_{1,b_1}g'_{2,b_2}\cdots g'_{\ell,b_\ell}\\
    &= (g_{1,b_1}g'_{1,b_1})\cdot (g_{2,b_1}g'_{2,b_2})\cdots
    (g_{\ell,b_\ell}g'_{\ell,b_\ell})\\
    &= g_{1,0}g_{2,0}\cdots g_{\ell,0} = e
  \end{align*}
  The second equality follows from the fact that $g_{i_1,b}$ and
  $g'_{i_2,b}$ commute for any distinct $i_1$ and $i_2$. The third
  equality follows from the fact that for any $i$ and $b$,
  $g_{i,b}g'_{i,b} = g_{i,0}$. This proves the first property claimed
  in the statement of the lemma. Similarly, we can see that $e$ is an
  idempotent: $e^2 = h_1^2 = e$.

  Fix any distinct $j,k\in [n]$. Let $b_1b_2\ldots b_\ell$ and
  $b'_1b'_2\ldots b'_\ell$ be the binary representations of $j$ and
  $k$. Since $j\neq k$, we can fix some $i$ such that $b_i\neq b'_i$.
  We have
  \begin{align*}
    h_jh'_k &= g_{1,b_1}g_{2,b_2}\cdots
    g_{\ell,b_\ell}g'_{1,b'_1}g'_{2,b'_2}\cdots g'_{\ell,b'_\ell}\\
    &= (g_{1,b_1}g'_{1,b_1})\cdot (g_{2,b_1}g'_{2,b_2})\cdots
    (g_{i,b_i}g'_{i,b'_i})\cdots(g_{\ell,b_\ell}g'_{\ell,b_\ell})\\
    &= (g_{1,b_1}g'_{1,b_1})\cdot (g_{2,b_1}g'_{2,b_2})\cdots
    0\cdots(g_{\ell,b_\ell}g'_{\ell,b_\ell}) = 0
  \end{align*}
  where the third equality follows from the fact that we have
  $g_{i,b}g'_{i,1-b} = 0$. This proves the second claim made in the
  lemma.

  Finally, we note that
  \begin{align*}
    |e|^2 &= |g_{1,0}g_{2,0}\cdots g_{\ell,0}|^2 =
    \left|\frac{1}{2^\ell}\sum_{T\subseteq \ell}\prod_{i\in
        T}e_{S_{i,1}}\right|^2\\
    &= \frac{1}{4^\ell}\left|\sum_{T\subseteq \ell}\prod_{i\in
        T}e_{S_{i,1}}\right|^2 = \frac{2^\ell}{4^\ell} =
    \frac{1}{2^\ell}
  \end{align*}
  It is easily seen from their definitions that the $h_j,h_j'$ and $e$
  can be computed in time $\poly(n)$. This completes the proof of the
  lemma.
\end{proof}

\section{The Symmetrized Determinant}\label{symsec}

In this section, we observe that the $2n\times 2n$ symmetrized
determinant over $O(n^2)$-dimensional matrix algebras is at least as
hard to compute as the permanent. This stands in marked contrast to
the result of Barvinok \cite{B}, who shows that over
\emph{constant-dimensional} matrix algebras, the symmetrized
determinant is polynomial-time computable. 

In this section, let $\field$ denote a field of characteristic $0$.
Let $X = \setcond{x_{ij}}{1\leq i,j\leq 2n}$ and $Y =
\setcond{y_{ij}}{1\leq i,j\leq n}$.  Recall that for $\sigma,\tau\in
S_{2n}$, the monomial $m_{\sigma,\tau}$ is
$x_{\sigma(1),\tau(1)}x_{\sigma(2),\tau(2)}\cdots
x_{\sigma(2n),\tau(2n)}$, and the monomial $m_{\sigma}$ is
$x_{1,\sigma(1)}x_{2,\sigma(2)}\cdots x_{2n,\sigma(2n)}$.

\begin{theorem}\label{thm_sym_ckt}
  If the $\sdet_{2n}(X)$ polynomial over $\field$ can be computed by a
  polynomial-sized noncommutative arithmetic circuit, then the
  polynomial $\cperm_n(Y)$ can also be computed by a polynomial-sized
  noncommutative arithmetic circuit.
\end{theorem}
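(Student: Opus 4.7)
The plan is to mirror the proof of Theorem \ref{thm_cayley_ckt}, replacing $\cdet_{2n}$ by $\sdet_{2n}$ and exploiting \emph{both} parts of Lemma \ref{lemma_abp_checker} (the Cayley argument used only the first part). Given a polynomial-size circuit $C$ for $\sdet_{2n}(X)$, I would first apply Corollary \ref{hadcor1} with the ABP $P$ of Lemma \ref{lemma_abp_checker} computing $F$ to obtain a polynomial-size circuit $C''$ computing the Hadamard product $\sdet_{2n} \circ F$.

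Next I would expand this Hadamard product using the description of $\sdet_{2n}$. Since $\sdet_{2n}(m_{\tau,\sigma}) = \sgn(\sigma)\sgn(\tau)/(2n)!$ for $\sigma,\tau \in S_{2n}$ (and $0$ on any other monomial), we have
\[
(\sdet_{2n}\circ F)(X) \;=\; \frac{1}{(2n)!}\sum_{\sigma,\tau\in S_{2n}} \sgn(\sigma)\sgn(\tau)\, F(m_{\tau,\sigma})\, m_{\tau,\sigma}.
\]
The second clause of Lemma \ref{lemma_abp_checker} forces $\tau = 1_{2n}$, so every surviving monomial is of the form $m_{1_{2n},\sigma} = m_\sigma$. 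The first clause then forces $\sigma = \rho(\pi)$ for some $\pi\in S_n$, and Lemma \ref{lemma_rho_pi} pulls the sign out as a single constant $\sgn(\rho_0)$. Thus
\[
(\sdet_{2n}\circ F)(X) \;=\; \frac{\sgn(\rho_0)}{(2n)!}\sum_{\pi\in S_n} m_{\rho(\pi)}.
\]

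Finally I would perform exactly the substitution used in Theorem \ref{thm_cayley_ckt}: replace $x_{ij}$ by $y_{(i+1)/2,\,j}$ when $i$ is odd and $j\in [n]$, and by $1$ in all other cases, and then scale the output of the resulting circuit by the field constant $\sgn(\rho_0)\cdot (2n)!$, which is invertible since $\field$ has characteristic $0$. Under this substitution $m_{\rho(\pi)}$ becomes $y_{1,\pi(1)}y_{2,\pi(2)}\cdots y_{n,\pi(n)}$, so the resulting circuit computes $\cperm_n(Y)$; its size is polynomial in $n$ and in the size of $C$.

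The only subtlety, compared with the Cayley case, is that the symmetrized determinant is supported on monomials $m_{\tau,\sigma}$ for \emph{arbitrary} $\tau,\sigma \in S_{2n}$ rather than only on the row-ordered monomials $m_\sigma$. This is precisely why the second clause of Lemma \ref{lemma_abp_checker} was designed, and I do not anticipate any genuine obstacle beyond carefully tracking that both clauses of the lemma fire in sequence.
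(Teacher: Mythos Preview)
Your proposal is correct and follows essentially the same approach as the paper's own proof: take the Hadamard product of $\sdet_{2n}$ with the polynomial $F$ from Lemma~\ref{lemma_abp_checker}, use the second clause to kill all $\tau\neq 1_{2n}$ and the first clause to restrict to $\sigma=\rho(\pi)$, then substitute and scale by $\sgn(\rho_0)(2n)!$. The only cosmetic difference is that you write out the coefficient $\sdet_{2n}(m_{\tau,\sigma})=\sgn(\sigma)\sgn(\tau)/(2n)!$ explicitly, whereas the paper leaves this implicit.
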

\begin{proof}
	Assume that $\sdet_{2n}(X)$ is computed by a circuit $C$ of size
	$s$. As in Theorem \ref{thm_cayley_ckt}, we will proceed by
	taking Hadamard product. Let $P$ be the ABP defined in Lemma
	\ref{lemma_abp_checker} and $F(X)$ the polynomial it computes. Let
	$F''$ denote the polynomial $\sdet_{2n}(X)\circ F$. Note that by
	Corollary \ref{hadcor1}, $F''$ can be computed by a circuit $C''$ of
	size $\poly(s,n)$. From Lemma \ref{lemma_abp_checker}, we have
	$F(m_{\sigma,\tau}) = 0$ unless $\sigma = 1_{2n}$, the identity
	permutation in $S_{2n}$; moreover, we also have $F(m_{1_{2n},\tau})
	= F(m_{\tau})$ which is $1$ if $\tau = \rho(\pi)$ for some $\pi\in
	S_n$ and $0$ otherwise.	 By the above reasoning, 
	\[F''(X) = \frac{1}{(2n)!}\sum_{\pi\in S_n}
	\sgn(\rho(\pi))
	m_{\rho(\pi)} = \frac{\sgn(\rho_0)}{(2n)!}\sum_{\pi\in S_n}
	m_{\rho(\pi)}\]
	Now, we substitute each $x_{ij}$ by $y_{\frac{1+i}{2},j}$ if $i$ is
	odd and $j\in [n]$ and by $1$ if $i$ is even or $j\notin [n]$ in the
	circuit $C''$. The effect of this substitution is to transform
	$m_{\rho(\pi)}$ into $y_{1,\pi(1)}y_{2,\pi(2)}\cdots y_{n,\pi(n)}$
	for each $\pi\in S_n$.  Hence, the resulting polynomial is simply
	$\frac{\sgn(\rho_0)\cperm_n(Y)}{(2n)!}$. Thus, by multiplying by
	$\sgn(\rho_0)(2n)!$, we obtain a circuit $C'$ of size $\poly(s,n)$
	that computes $\cperm_n(Y)$.
\end{proof}

\begin{theorem}\label{thm_sym_algm}
  If there is a polynomial-time algorithm $\mathcal{A}$ that computes
  the $2n\times 2n$ symmetrized determinant of matrices with entries in
  $M_S(\field)$, for $S=c\cdot n^2$ for suitable $c>0$, then there is
  a polynomial-time algorithm that computes the $n\times n$ permanent
  over $\field$.
\end{theorem}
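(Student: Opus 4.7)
The plan is to mimic the proof of Theorem~\ref{thm_cayley_algm} almost verbatim, replacing $\cdet_{2n}$ by $\sdet_{2n}$ and tracking the extra normalization factor $\frac{1}{(2n)!}$ in the definition of the symmetrized determinant. Specifically, I will feed the hypothesized polynomial-time algorithm $\mathcal{A}$ for $\sdet_{2n}$ over $M_S(\field)$ together with the ABP $P$ of Lemma~\ref{lemma_abp_checker} into Corollary~\ref{hadcor2}. This yields a polynomial-time algorithm that, on any scalar input, evaluates $\sdet_{2n}\circ F$ over $\field$, where $F$ is the degree-$2n$ polynomial computed by $P$.

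The key difference from the Cayley case is that the monomials of $\sdet_{2n}$ are indexed by \emph{pairs} $(\sigma,\tau)\in S_{2n}\times S_{2n}$ and have the form $m_{\sigma,\tau}$ rather than $m_\sigma$, so I need both parts of Lemma~\ref{lemma_abp_checker}. By the second part, $F$ annihilates every $m_{\sigma,\tau}$ with $\sigma\neq 1_{2n}$; for $\sigma=1_{2n}$ one has $m_{1_{2n},\tau}=m_\tau$, and by the first part $F(m_\tau)=1$ precisely when $\tau=\rho(\pi)$ for some $\pi\in S_n$. Combining this with $\sgn(1_{2n})=1$ and Lemma~\ref{lemma_rho_pi}, I get
\[
\sdet_{2n}\circ F \;=\; \frac{1}{(2n)!}\sum_{\pi\in S_n}\sgn(\rho(\pi))\, m_{\rho(\pi)} \;=\; \frac{\sgn(\rho_0)}{(2n)!}\sum_{\pi\in S_n} m_{\rho(\pi)}.
\]

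Now, exactly as in Theorem~\ref{thm_cayley_algm}, to compute the $n\times n$ permanent of a matrix $(a_{ij})$ over $\field$ I substitute $x_{2i-1,j}=a_{ij}$ for $1\leq i,j\leq n$ and $x_{i,j}=1$ whenever $i$ is even or $j>n$. Under this substitution each monomial $m_{\rho(\pi)}$ collapses to $y_{1,\pi(1)}y_{2,\pi(2)}\cdots y_{n,\pi(n)}$ evaluated at the $a_{ij}$, so the evaluation output of the Corollary~\ref{hadcor2} algorithm equals $\frac{\sgn(\rho_0)}{(2n)!}\cperm_n(a_{11},\ldots,a_{nn})$. Multiplying by the easily computable scalar $\sgn(\rho_0)\cdot(2n)!$ (here I use $\operatorname{char}(\field)=0$ so that $(2n)!$ is invertible) recovers $\cperm_n(a_{11},\ldots,a_{nn})=\myperm_n(A)$, giving the desired polynomial-time permanent algorithm.

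There is no real obstacle: the entire argument is a routine adaptation of Theorem~\ref{thm_cayley_algm} once one notices that the ABP of Lemma~\ref{lemma_abp_checker} was specifically designed to handle monomials of the form $m_{\sigma,\tau}$ as well as $m_\sigma$. The only point deserving a sentence of care is the $(2n)!$ factor from the definition of $\sdet_{2n}$; everything else, including the choice $S=c\cdot n^2$ for the dimension of the matrix algebra, is inherited verbatim from the Cayley case because the ABP $P$ has $O(n^2)$ nodes.
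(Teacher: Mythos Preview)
Your proposal is correct and follows essentially the same approach as the paper: invoke Corollary~\ref{hadcor2} with the ABP of Lemma~\ref{lemma_abp_checker}, use both parts of that lemma to identify $\sdet_{2n}\circ F$ with $\frac{\sgn(\rho_0)}{(2n)!}\sum_{\pi}m_{\rho(\pi)}$, substitute as in Theorem~\ref{thm_cayley_algm}, and clear the scalar $\frac{\sgn(\rho_0)}{(2n)!}$. The paper's proof is slightly terser in that it defers the computation of $\sdet_{2n}\circ F$ to Theorem~\ref{thm_sym_ckt}, but the content is the same.
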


\begin{proof}
	The proof is almost exactly identical to that of
	Theorem~\ref{thm_cayley_algm}. Consider the algorithm given by
	Corollary~\ref{hadcor2} for computing $\sdet_{2n}\circ F$ over the
	field $\field$, where the ABP in Corollary~\ref{hadcor2} is the
	ABP of Lemma~\ref{lemma_abp_checker} computing $F$. 

	In order to evaluate the permanent over inputs $a_{ij}, 1\leq
        i,j\leq n$ we will substitute $x_{2i-1,j}=a_{ij}$ for $1\leq
        i, j\leq n$ and we put $x_{i,j}=1$ when $i$ is even or
        $j>n$. As in the proof of Theorem~\ref{thm_sym_ckt}, it
        follows that for this substitution the algorithm computing
        $\sdet_{2n}\circ F$ will output
        $\frac{\sgn(\rho_0)}{(2n)!}\cperm_n(a_{11},\ldots,a_{nn})$.
        Since $\sgn(\rho_0)$ and $(2n)!$ are easily computable, we
        have a polynomial-time algorithm for computing the $n\times n$
        permanent over $\field$.
\end{proof}

\section{The Moore determinant}\label{mooresec}

We demonstrate by a simple reduction that the Moore determinant and
permanent are interreducible. We also show that the computing the
Moore determinant over a field of characteristic zero is at least as
hard as counting the number of directed Hamilton Cycles of a directed
graph, which is a well-known $\numP$-complete problem. If the field is
of characteristic $k$, then computing the Moore determinant over the
field is at least as hard as counting the number of Hamilton cycles of
a directed graph modulo the prime $k$, which is hard for $\ModkP$.

Assume $X = \setcond{x_{ij}}{1\leq i,j\leq n}$.  Given a permutation
$\sigma\in S_n$, we write $\sigma$ as a product of disjoint cycles as
follows: $(n^\sigma_{11}\cdots n^\sigma_{1l_1})(n^\sigma_{21}\cdots
n^{\sigma}_{2l_2})\cdots(n^{\sigma}_{r1}\cdots n^{\sigma}_{rl_r})$
with $n^\sigma_{i1} < n^\sigma_{ij}$ for all $i\in [r]$ and $j\in
[l_r]\setminus \bra{1}$ and satisfying $n^\sigma_{11} > n^\sigma_{21}
> \cdots > n^\sigma_{r1}$.  Let $w_\sigma$ denote the monomial
$x_{n^{\sigma}_{11},n^\sigma_{12}}\cdots x_{n^{\sigma}_{1l_r},
n^\sigma_{11}}\cdots x_{n^\sigma_{r1},n^\sigma_{r2}}\cdots
x_{n^{\sigma}_{rl_r},n^\sigma_{r1}}$. 

Let $C_n$ denote the set of all $1$-cycles in $S_n$, i.e permutations
whose cycle decomposition consists of a single cycle of length $n$.
Define the polynomial $\hc_n(x_{11},\ldots,x_{nn})\in\fx{X}$ to be
$\sum_{\sigma\in C_n}w_\sigma$. Fix any directed graph $G$ on $n$
vertices with adjacency matrix $A$. Let $H(G)$ denote
$\hc_n(A(1,1),\ldots,A(n,n))$. The quantity $H(G)$ has a simple
description: if $\field$ is of characteristic $0$, then $H(G)$ is the
number of directed Hamiltonian cycles in $G$; and if $\field$ is of
characteristic $k$, then $H(G)$ is the number of directed Hamiltonian
cycles of $G$ modulo $k$.

We have the following easy lemma:

\begin{lemma}\label{lemma_abp_sgn}
	There are ABPs $P'_1$ and $P'_2$ of size $O(n^2)$ and width $n$ that
	compute homogeneous polynomials $F'_1,F'_2\in\fx{X}$ of degree $n$ such
	that for any $\sigma\in S_{n}$, we have 
	\begin{itemize}
		\item $F'_1(w_\sigma) = \sgn(\sigma)$.  
		\item $F'_2(w_\sigma) = \sgn(\sigma)$ if $\sigma\in C_n$ and $0$
			otherwise.
	\end{itemize}
	Moreover, the above ABPs can be computed in time $\poly(n)$.
\end{lemma}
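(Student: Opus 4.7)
Both ABPs are built as finite-automaton-style trellises with $n+1$ levels (numbered $0$ through $n$), a single source at level $0$, a single sink at level $n$, and at most $n$ intermediate states per level; this gives $O(n^2)$ nodes and width $n$ as required. At intermediate levels I label the states by elements of $[n]$, where the state $j$ at level $k$ represents ``the second index of the $k$-th character read is $j$'' (equivalently, the position reached after $k$ steps of the walk encoded by a Moore monomial). Each edge is labeled by a homogeneous linear form in $X$ that specifies which characters $x_{a,b}$ fire that transition and with what coefficient.

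For $F'_2$ I will restrict the automaton to accept only monomials whose first character has first index $1$ and whose last character has second index $1$, with all intermediate states in $\{2,\dots,n\}$. Concretely, the edge from the source to state $j$ at level $1$ is labeled $x_{1,j}$, each intermediate edge from state $p$ to state $j$ is labeled $x_{p,j}$, and the edge from state $p$ at level $n-1$ to the sink is labeled $(-1)^{n-1} x_{p,1}$. For any $\sigma\in C_n$ we can write $w_\sigma = x_{1,v_1} x_{v_1,v_2} \cdots x_{v_{n-1},1}$ where $v_1,\dots,v_{n-1}$ permute $\{2,\dots,n\}$, and this uniquely traces the path $q_0\to v_1\to\cdots\to v_{n-1}\to q_n$ with total coefficient $(-1)^{n-1}=\sgn(\sigma)$. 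For any $\sigma\notin C_n$, the Moore ordering places the cycle containing $1$ last, so the first character of $w_\sigma$ has first index strictly larger than $1$; no path in the ABP produces $w_\sigma$, hence $F'_2(w_\sigma)=0$.

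For $F'_1$ I will use the same layout but design the edges to emit $+1$ for a ``walk-continuation'' transition ($a_{k+1}=b_k$) and $-1$ for a ``new-cycle'' transition ($a_{k+1}<b_k$). The source-to-state-$j$ edge carries the label $(-1)^{n+1}\sum_{a\in [n]} x_{a,j}$; each edge from state $p$ at level $k$ to state $j$ at level $k+1$ for $1\le k\le n-2$ is labeled $x_{p,j}-\sum_{a<p} x_{a,j}$; and the edge from state $p$ at level $n-1$ to the sink is labeled $\sum_{j\in[n]} \bigl(x_{p,j}-\sum_{a<p} x_{a,j}\bigr)$. In any Moore monomial $w_\sigma$, the cycle starts are strictly decreasing (so at each of the $r-1$ cycle boundaries we have $a_{k+1}<b_k$) and within every cycle we have $a_{k+1}=b_k$, so there is a unique path producing $w_\sigma$ and its total coefficient is $(-1)^{n+1}\cdot 1^{n-r}\cdot (-1)^{r-1} = (-1)^{n-r} = \sgn(\sigma)$, where $r$ is the number of cycles of $\sigma$. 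Monomials not of the form $w_\sigma$ may receive nonzero coefficients, but the lemma places no constraint on them.

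The main obstacle is that the width-$n$ budget does not permit tracking both the current position and the start of the current cycle, which would naively require $\Theta(n^2)$ states per level. The key observation that makes the construction go through is that in any Moore monomial the walk-vs-boundary distinction is already visible from the pair $(b_k,a_{k+1})$: a walk-step satisfies $a_{k+1}=b_k$ while a cycle boundary satisfies $a_{k+1}<b_k$, so a single linear-form edge label can assign the correct sign without having to remember the current cycle's start. Once the edge labels are fixed as above, the size, width, and $\poly(n)$ constructibility assertions of the lemma are immediate from the explicit description of the two ABPs.
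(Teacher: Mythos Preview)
Your construction is correct and fills in all the details that the paper omits. Both ABPs behave as claimed: for $P'_2$ the restriction that the first character have first index $1$ kills every $w_\sigma$ with $\sigma\notin C_n$, and for $P'_1$ your sign accounting $(-1)^{n+1}\cdot(-1)^{r-1}=(-1)^{n-r}=\sgn(\sigma)$ is exactly right. The uniqueness of the accepting path for each $w_\sigma$ follows because the target state at every step is forced to equal the second index of the character just read.

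The paper's sketch takes a slightly different route. It observes that the number of cycles $c_\sigma$ equals the number of \emph{left-to-right minima} in the sequence of first indices $n^\sigma_{11},n^\sigma_{12},\ldots,n^\sigma_{rl_r}$, and suggests an ABP whose state is the running minimum of the first indices seen so far; a new minimum signals a new cycle and contributes a factor $-1$. You instead keep as state the \emph{second index} $b_k$ of the last character read, and exploit the fact that in any Moore monomial either $a_{k+1}=b_k$ (walk continuation) or $a_{k+1}<b_k$ (cycle boundary), never $a_{k+1}>b_k$. Both choices give width $n$ and size $O(n^2)$; your invariant is arguably the more natural one for the Moore encoding since it reads the cycle boundary directly off the adjacent pair rather than maintaining a global minimum.
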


\begin{proof}
  Recall that given a permutation $\sigma\in S_n$, the quantity
  $\sgn(\sigma)$ is $(-1)^{n+c_\sigma}$, where $c_\sigma$ is the
  number of disjoint cycles in $\sigma$. Moreover, note that if
  $\sigma$ as a product of disjoint cycles is $(n^\sigma_{11}\cdots
  n^\sigma_{1l_1})(n^\sigma_{21}\cdots
  n^{\sigma}_{2l_2})\cdots(n^{\sigma}_{r1}\cdots n^{\sigma}_{rl_r})$
  as above, the value $c_\sigma$ is simply the number of
  \emph{left-to-right minima} in this representation, i.e the number
  of $n^\sigma_{ij}$ such that $n^{\sigma}_{ij} < n^{\sigma}_{kl}$ for
  all $n^\sigma_{kl}$ to the \emph{left} of $n^\sigma_{ij}$. Using
  this observation, it is easy to design an ABP $P'_1$ that keeps track
  of the sign of the permutation and computes a polynomial $F'_1$ as
  above. The ABP $P'_2$ can be constructed similarly; the main
	difference from the case of $P'_1$ is that the ABP must produce the
	coefficient $0$ unless $n^{\sigma}_{11} = 1$.  We omit the formal
	description of $P'_1$ and $P'_2$.
\end{proof}

The analogue of Theorem \ref{thm_cayley_ckt} for the Moore
determinant follows below. The statement here is stronger: we show
that the arithmetic circuit complexity of $\mdet_n(X)$ is polynomial
\emph{if and only if} the arithmetic circuit complexity of
$\mperm_n(X)$ is polynomial.

\begin{theorem}\label{thm_moore_ckt}
  The Moore determinant polynomial $\mdet_n(X)$ can be computed by a
  polynomial-sized noncommutative arithmetic circuit if and only if
  the Moore permanent polynomial $\mperm_n(X)$ can be computed by a
  polynomial-sized noncommutative arithmetic circuit.
\end{theorem}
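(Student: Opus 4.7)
The plan is to deduce both directions by taking Hadamard products with the sign-encoding polynomial $F'_1$ from Lemma \ref{lemma_abp_sgn}. The key observation is that for every $\sigma \in S_n$ the monomial $w_\sigma$ appears in $\mdet_n(X)$ with coefficient $\sgn(\sigma)$ and in $\mperm_n(X)$ with coefficient $1$, and the canonical cyclic form used to build $w_\sigma$ ensures that distinct permutations give rise to distinct monomials. Since $\sgn(\sigma)^2 = 1$, Hadamard-multiplying either Moore polynomial by $F'_1$ interchanges the two.

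For the forward direction, suppose $\mdet_n(X)$ is computed by a noncommutative arithmetic circuit $C$ of size $s = \poly(n)$. By Lemma \ref{lemma_abp_sgn}, the polynomial $F'_1$ is computed by an ABP $P'_1$ of size $O(n^2)$, so by Corollary \ref{hadcor1} we can construct a circuit of size $\poly(s, n)$ for $\mdet_n \circ F'_1$. The only monomials with nonzero coefficient in $\mdet_n$ are of the form $w_\sigma$; on such a monomial the Hadamard coefficient equals $\sgn(\sigma) \cdot \sgn(\sigma) = 1$, and on every other monomial it is $0$. Hence $\mdet_n \circ F'_1 = \sum_{\sigma \in S_n} w_\sigma = \mperm_n(X)$, yielding the desired small circuit for the Moore permanent.

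The reverse direction is entirely symmetric: starting from a polynomial-size circuit for $\mperm_n$, Corollary \ref{hadcor1} applied with the same ABP $P'_1$ produces a polynomial-size circuit for $\mperm_n \circ F'_1$. The monomial-by-monomial computation now gives coefficient $1 \cdot \sgn(\sigma) = \sgn(\sigma)$ on $w_\sigma$ and $0$ elsewhere, so $\mperm_n \circ F'_1 = \mdet_n(X)$.

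I do not expect a substantive obstacle here: the two technical ingredients (the small ABP $P'_1$ for $F'_1$, and the Hadamard-product construction of Corollary \ref{hadcor1}) are already in hand. The only point worth flagging is that $F'_1$ may well take nonzero values on monomials outside $\{w_\sigma : \sigma \in S_n\}$, but these contribute nothing to either Hadamard product since both $\mdet_n$ and $\mperm_n$ vanish off this set. Consequently the behavior of $F'_1$ on non-$w_\sigma$ monomials is irrelevant to the argument, and the ``if and only if'' follows.
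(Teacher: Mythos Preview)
Your proposal is correct and follows essentially the same approach as the paper: both directions are obtained by taking the Hadamard product with $F'_1$ from Lemma~\ref{lemma_abp_sgn} and invoking Corollary~\ref{hadcor1}, using the identities $\mperm_n = \mdet_n \circ F'_1$ and $\mdet_n = \mperm_n \circ F'_1$. Your remark that the behavior of $F'_1$ off the set $\{w_\sigma\}$ is irrelevant is a nice clarification, but the argument is otherwise identical to the paper's.
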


\begin{proof}
  As in the proof of Theorem \ref{thm_cayley_ckt}, we will use
  the Hadamard product; this time, it can be used to erase or
  introduce the signs of the permutations corresponding to each
  monomial $w_\sigma$. Formally, we have $\mperm_n(X) =
  \mdet_n(X)\circ F'_1(X)$ and $\mdet_n(X) = \mperm_n(X)\circ F'_1(X)$,
  where $F'_1(X)$ is the polynomial defined in the statement of Lemma
  \ref{lemma_abp_sgn}.  Hence, if $\mdet_n(X)$ (resp. $\mperm_n(X)$)
  is computed by a noncommutative arithmetic circuit of size $s$, then
  by applying Corollary~\ref{hadcor1}, we see that $\mperm_n(X)$ (resp.
  $\mdet_n(X)$) is computed by a noncommutative arithmetic circuit of
  size $\poly(s,n)$.
\end{proof}

\begin{remark}\label{remark_equiv}
	Note that Theorem \ref{thm_moore_ckt} proves an equivalence (up to
  polynomial factors) between the arithmetic circuit complexities of
  the Moore determinant and permanent. This is a stronger statement
  than we obtained in the case of the Cayley determinant and
  permanent, where we only showed (roughly) that the Cayley
  determinant is at least as hard to compute as the Cayley permanent.
  The reason for this is that we are unable to obtain a small ABP that
  performs the function of $P'_1$ for the monomials $m_\sigma$ (defined
  in Section \ref{section_cayley}): that is, a small ABP computing a
  polynomial $F_1$ such that $F_1(m_\sigma) = \sgn(m_\sigma)$ for
  every $\sigma\in S_n$.  However, we are unable to rule out the
  possibility that such an ABP exists. If it does, then as above, we
  can obtain a simple equivalence between the complexities of the
  Cayley determinant and permanent.
\end{remark}
%
%
%
%

We now consider the complexity of computing the Moore determinant over
matrix algebras of polynomial dimension. We can, as in the previous
sections, show that this is at least as hard as computing the
permanent over matrices with entries from $\field$, but we take a
different route this time. We show that if the Moore determinant over
a field of characteristic $k$ can be computed in polynomial time, then
there is a polynomial-time algorithm to compute the number of directed
Hamilton cycles $H(G)$ modulo $k$ for an input directed graph
$G$. This allows us to draw stronger consequences, namely that the
Moore determinant is hard to compute even when the field $\field$ is
of characteristic $2$, something that would not follow if we reduced
the permanent to this problem (since the permanent is polynomial-time
computable over fields of characteristic $2$).

\begin{theorem}\label{thm_moore_algm}
  If there is a polynomial-time algorithm $\mathcal{A}$ that computes
  the $n\times n$ Moore determinant of matrices with entries in
  $M_S(\field)$, for $S=c\cdot n^2$ for suitable $c>0$, then there is
  a polynomial-time algorithm that, on input a directed graph $G$,
  computes $H(G)$.
\end{theorem}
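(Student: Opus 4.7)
The plan is to mimic the reduction used for the Cayley determinant (Theorem \ref{thm_cayley_algm}), but replacing the ABP from Lemma \ref{lemma_abp_checker} by the ABP $P'_2$ from Lemma \ref{lemma_abp_sgn}, so that we both select only the Hamilton-cycle monomials and cancel out their signs in one shot.

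More concretely, I would start from the Hadamard product $\mdet_n(X) \circ F'_2(X)$, where $F'_2$ is the polynomial computed by the ABP $P'_2$ of Lemma \ref{lemma_abp_sgn}. Since the only monomials of $\mdet_n(X)$ with nonzero coefficient are the $w_\sigma$ for $\sigma \in S_n$, and $F'_2(w_\sigma)$ vanishes unless $\sigma \in C_n$ (in which case it equals $\sgn(\sigma)$), one computes
\[
\mdet_n(X) \circ F'_2(X) \;=\; \sum_{\sigma \in C_n} \sgn(\sigma)^2\, w_\sigma \;=\; \sum_{\sigma \in C_n} w_\sigma \;=\; \hc_n(X).
\]
Thus evaluating this Hadamard product at the adjacency matrix entries $A(i,j)$ of an input digraph $G$ yields precisely $H(G)$ (interpreted over the characteristic of $\field$).

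Next I would invoke Corollary \ref{hadcor2} with $f = \mdet_{n}$, $g = F'_2$, and the hypothesized polynomial-time black-box algorithm $\mathcal{A}$ playing the role of the algorithm for $f$ over $M_S(\field)$. Since the ABP $P'_2$ has $S = O(n^2)$ vertices, the matrices $A_1, \ldots, A_{n^2}$ produced by Theorem \ref{thm_gen_redn} lie in $M_S(\field)$ with $S$ of the required form $c \cdot n^2$. Feeding $\mathcal{A}$ the inputs $A_{ij} \cdot A(i,j)$ (one per variable $x_{ij}$) and reading off the $(1,S)$ entry of the resulting $S \times S$ matrix gives $(\mdet_n \circ F'_2)(A(1,1), \ldots, A(n,n)) = H(G)$ in polynomial time.

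There is essentially no real obstacle beyond assembling the existing pieces; the main thing to be careful about is that $P'_2$ has size $O(n^2)$ and width $n$ (so that the matrix dimension matches the $S = c \cdot n^2$ hypothesis), and that the sign cancellation $\sgn(\sigma)^2 = 1$ works uniformly over all characteristics, so the conclusion $H(G)$ is obtained whether $\field$ has characteristic zero (giving the number of Hamilton cycles, a $\numP$-complete quantity) or positive characteristic $k$ (giving that number modulo $k$, which is $\ModkP$-hard).
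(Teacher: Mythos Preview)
Your proposal is correct and is essentially identical to the paper's own proof: both observe that $\mdet_n\circ F'_2=\hc_n$ via the sign cancellation $\sgn(\sigma)^2=1$, then invoke Corollary~\ref{hadcor2} with the ABP $P'_2$ of Lemma~\ref{lemma_abp_sgn} and the adjacency-matrix entries as scalar inputs. Your write-up is simply more detailed than the paper's terse version.
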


\begin{proof}
  Note that $\hc_n(X) = \mdet_n(X)\circ F'_2$, where $F'_2$ is the
  polynomial computed by ABP $P'_2$ constructed in
  Lemma~\ref{lemma_abp_sgn}. Moreover, $H(G) =
  \hc_n(A(1,1),\ldots,A(n,n))$, where $A$ is the adjacency matrix of
  the graph $G$. Hence, to compute $H(G)$, we need to compute
  $\hc_n(A(1,1),\ldots,A(n,n))$, which can be done in polynomial time
  by Corollary~\ref{hadcor2}.
\end{proof}

\section{Completeness Results}

In this section we observe that the noncommutative Cayley determinant
over integer matrices is complete for GapP w.r.t.\ polynomial-time
Turing reductions.  Likewise, the noncommutative Cayley determinant
over a finite field of characteristic $k\neq 2$ is hard for the
modular counting complexity class $\ModkP$ w.r.t.\ polynomial-time
Turing reductions. These observations also hold for the symmetrized
determinant. For the Moore determinant, we prove the above results
without any restriction on the characteristic of the underlying
field. We formally describe these observations.

\begin{definition}{\rm\cite{FFK94},\cite{BG92}}
  A function $f:\Sigma^*\longrightarrow \mathbb{Z}$ is in $\GapP$ if
  there is a polynomial time NDTM $M$ such that for each $x\in
  \Sigma^*$ the value $f(x)$ is $\acc_M(x)-\rej_M(x)$.

  For a prime $k$, the class $\ModkP$ consist of languages $L\subseteq
  \Sigma^*$ such that for some function $f\in\GapP$ we have $x\in L$
  if and only if $f(x)\equiv 0(mod~k)$.
\end{definition}

By Valiant's result \cite{V79} it is known that the integer permanent is
$\GapP$-complete with respect to polynomial-time Turing reductions.
Furthermore, the permanent over $\F_k$ is $\ModkP$-hard for prime
$k\neq 2$.

Now, for $n\in\naturals$, consider the Cayley determinant for
$2n\times 2n$ matrices with entries from $M_S(\mathbb{Z})$, where $S =
cn^2$ for some constant $c$. By Theorem~\ref{thm_cayley_algm}, there
is a fixed $c > 0$ such that  computing the integer permanent for
$n\times n$ matrices is polynomial-time reducible to computing the
$(1,S)^{th}$ entry of such a Cayley determinant. The same observation
holds modulo $k$ for a prime $k$.

Furthermore, the problem of computing the $(1,S)^{th}$ entry of such a
Cayley determinant over $\mathbb{Z}$ is easily seen to be in GapP: we
can design a polynomial-time NDTM which takes as input a $2n\times 2n$
matrix with entries from $M_S(\mathbb{Z})$ and the difference in the
number of accepting and rejecting paths is the $(1,S)^{th}$ entry of
its Cayley determinant. Hence we have the following.

\begin{corollary}
  There exists a constant $c$ such that the following holds. For $S =
  cn^2$, computing the $(1,S)^{th}$ entry of the Cayley determinant
  for $2n\times 2n$ matrices with entries from $M_S(\mathbb{Z})$ is
  $\GapP$-complete w.r.t.\ polynomial-time Turing reductions. Given a
  finite field $\field$ of characteristic $k\neq 2$, computing the
  $(1,S)^{th}$ of the Cayley determinant for $2n\times 2n$ matrices
  over $M_S(\field)$ is hard w.r.t.\ polynomial-time Turing reductions
  for $\ModkP$.
\end{corollary}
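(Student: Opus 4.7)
The plan is to prove both parts of the corollary by combining a straightforward $\GapP$-membership argument with the hardness reductions already established in Theorem~\ref{thm_cayley_algm}, together with Valiant's classical theorem.

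For the $\GapP$ upper bound, I would design a polynomial-time NDTM $M$ whose acceptance gap on input $(a_{ij})\in M_{2n}(M_S(\mathbb{Z}))$ equals $\cdet_{2n}(a)(1,S)$. On each computation path, $M$ nondeterministically guesses a permutation $\sigma\in S_{2n}$ together with intermediate indices $1=k_0,k_1,\ldots,k_{2n-1},k_{2n}=S\in[S]$, and on that branch implements as its gap the signed integer
\[
\sgn(\sigma)\,\prod_{i=1}^{2n} a_{i,\sigma(i)}(k_{i-1},k_i).
\]
Summing over all branches yields exactly $\cdet_{2n}(a)(1,S)$, since the product above simply unpacks the matrix-product definition of the $(1,S)$-th entry of $a_{1,\sigma(1)}\,a_{2,\sigma(2)}\cdots a_{2n,\sigma(2n)}$. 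The signed scalar product along a branch is then realized as a $\GapP$ function by appealing to the standard closure properties of $\GapP$ under polynomial-length products and under multiplication by $\pm 1$~\cite{FFK94}: each entry $a_{i,\sigma(i)}(k_{i-1},k_i)$ is a polynomially-bounded integer with a trivial $(\acc,\rej)$ representation, and $\sgn(\sigma)$ is computable from $\sigma$ in polynomial time.

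For the hardness direction, I would chain two existing reductions. Theorem~\ref{thm_cayley_algm} gives, for a fixed constant $c$, a polynomial-time Turing reduction from computing the $n\times n$ permanent over $\field$ to computing the $(1,S)$-th entry of the Cayley determinant of a $2n\times 2n$ matrix over $M_S(\field)$, with $S=cn^2$. Specializing to $\field=\mathbb{Z}$ (the construction of Theorem~\ref{thm_cayley_algm} goes through verbatim over the integers, since the ABP of Lemma~\ref{lemma_abp_checker} has $0/1$ labels) and combining with Valiant's theorem that the integer permanent is $\GapP$-complete under polynomial-time Turing reductions yields $\GapP$-hardness; together with the membership above, this gives $\GapP$-completeness. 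For the second part, specializing to $\field=\F_k$ for a prime $k\neq 2$ and combining with the fact that the permanent over $\F_k$ is $\ModkP$-hard under polynomial-time Turing reductions yields $\ModkP$-hardness of the $(1,S)$-th entry of the $2n\times 2n$ Cayley determinant over $M_S(\F_k)$.

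The only delicate step is the $\GapP$-membership bookkeeping: one has to verify that the signed scalar product along a branch is faithfully realized as a difference of accepting and rejecting paths when the $a_{ij}$-entries themselves may be negative, and that the permutation sign is folded in without blowing up runtime. Both are standard applications of the closure properties of $\GapP$, so the argument is routine once set up; everything else is simply assembling the two existing reductions.
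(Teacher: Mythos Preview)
Your proposal is correct and follows essentially the same approach as the paper: the paper establishes hardness by chaining Theorem~\ref{thm_cayley_algm} with Valiant's result (and its $\ModkP$ analogue), and establishes $\GapP$-membership by designing a polynomial-time NDTM whose acceptance gap equals the $(1,S)$-th entry of the Cayley determinant. Your write-up is in fact more detailed than the paper's, which leaves the NDTM construction at the level of ``easily seen''; your explicit guessing of $\sigma$ together with the intermediate indices $k_0,\ldots,k_{2n}$ is exactly the natural way to unpack that claim.
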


We have similar GapP-completeness and $\ModkP$-hardness consequences
for the symmetrized determinant from the results in
Sections~\ref{symsec}. For the Moore determinant, by
Theorem~\ref{thm_moore_algm}, we additionally obtain hardness for
$\oplus$P over fields of characteristic $2$.

\begin{corollary}
  There exists a constant $c$ such that the following holds. For $S =
  cn^2$, computing the $(1,S)^{th}$ entry of the Moore determinant for
  $2n\times 2n$ matrices with entries from $M_S(\mathbb{Z})$ is
  $\GapP$-complete w.r.t.\ polynomial-time Turing reductions. Given a
  finite field $\field$ of any characteristic $k>1$, computing the
  $(1,S)^{th}$ of the Moore determinant for $2n\times 2n$ matrices
  over $M_S(\field)$ is hard w.r.t.\ polynomial-time Turing reductions
  for $\ModkP$.
\end{corollary}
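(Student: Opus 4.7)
The plan is to establish containment in the relevant counting class and then quote the hardness results from Section~\ref{mooresec}, with the main subtlety being why the characteristic~$2$ case now goes through. I expect almost no obstacle beyond being careful about what was already proved: Theorem~\ref{thm_moore_algm} is doing the heavy lifting.

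First I would handle \emph{containment} in $\GapP$ for the integer case. The $(1,S)$-entry of $\mdet_{2n}(A)$ for $A\in M_{2n}(M_S(\mathbb{Z}))$ expands as
\[
\sum_{\sigma\in S_{2n}}\sgn(\sigma)\,[w_\sigma(A)](1,S),
\]
where $w_\sigma(A)$ is the explicit ordered product of $2n$ matrices from $M_S(\mathbb{Z})$ given by the cycle decomposition of $\sigma$. An NDTM nondeterministically guesses a permutation $\sigma\in S_{2n}$ (described by the cycle structure used in the definition of $\mdet_n$), computes $\sgn(\sigma)$ from its cycle decomposition, and then uses further nondeterministic branching to guess an index sequence $(k_0,k_1,\ldots,k_{2n})$ with $k_0=1$ and $k_{2n}=S$ realizing a summand of the $(1,S)$-entry of the matrix product $w_\sigma(A)$. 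Standard gadgets producing an accept/reject difference equal to a given integer entry, combined with the sign $\sgn(\sigma)$, make the overall gap equal to the desired entry. Equivalently, one invokes the closure of $\GapP$ under polynomially-bounded signed summation and matrix products of polynomial-size integer entries.

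Next I would deduce \emph{hardness}. By Theorem~\ref{thm_moore_algm}, any polynomial-time algorithm computing the $(1,S)$-entry of $\mdet_{2n}$ over $M_S(\mathbb{Z})$ (with $S=cn^2$) also computes $H(G)$, the number of directed Hamilton cycles of an input digraph $G$. Counting directed Hamilton cycles is a classical $\numP$-complete problem, and since $\GapP=\numP-\numP$, this yields $\GapP$-hardness under polynomial-time Turing reductions. Combined with the containment above, we obtain the first claim.

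For the $\ModkP$-hardness statement I would apply the same reduction over $\field$ of characteristic $k$: Theorem~\ref{thm_moore_algm} reduces the computation of $H(G)\bmod k$ to the Moore determinant entry over $M_S(\field)$. Counting directed Hamilton cycles modulo any prime $k$ is $\ModkP$-hard, so this gives the desired hardness. The crucial point—and the reason this corollary is stronger than its Cayley/symmetrized analogue—is that the reduction goes via Hamilton cycles rather than via the permanent; in characteristic $2$ the permanent collapses to the determinant and becomes polynomial-time computable, so the earlier Cayley/symmetrized approach yields nothing, whereas counting Hamilton cycles remains $\oplus\mathrm{P}$-hard. This is essentially the only place where one must pause: once one notices that Theorem~\ref{thm_moore_algm} routes through $H(G)$ rather than $\cperm_n$, the extension to all $k>1$ is automatic.
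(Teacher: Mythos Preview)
Your proposal is correct and follows essentially the same route as the paper: invoke Theorem~\ref{thm_moore_algm} to reduce $H(G)$ (exact or modulo $k$) to the Moore determinant entry, and cite the known $\GapP$-completeness and $\ModkP$-hardness of counting directed Hamilton cycles. Your explicit treatment of $\GapP$-containment is slightly more detailed than the paper, which argues this only for the Cayley case earlier in the section and leaves the Moore case implicit; and your emphasis on why the Hamilton-cycle route (rather than the permanent) handles characteristic~$2$ matches exactly the paper's motivation for Theorem~\ref{thm_moore_algm}.
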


\begin{proof}
  The result follows from Theorem \ref{thm_moore_algm} and the
  following observations: computing $H(G)$ over the rationals on an
  input graph $G$ is $\GapP$-complete w.r.t.\ polynomial-time Turing
  reductions; similarly, computing $H(G)$ over a field $\field$ of
  characteristic $k$ (including $k=2$) is hard for $\ModkP$ w.r.t.\
  polynomial-time Turing reductions.
\end{proof}

\section{Discussion}
Our work raises further interesting questions regarding the complexity
of the noncommutative determinant.

An important open question is the complexity of computing the
noncommutative determinant over constant dimensional matrix algebras.
Theorem \ref{thm_cayley_algm} can be easily used to show that assuming
that the permanent of an $n\times n$ matrix over $\field$ cannot be
computed in subexponential time, the $n\times n$ noncommutative
Cayley, symmetrized, and Moore determinants with entries from
$M_{(\log n)^{\omega(1)}}(\field)$ cannot be computed in polynomial
time. Can one strengthen this result to one that says something about
computing the Cayley or Moore determinant over matrices with entries
from $M_c(\field)$ for some absolute constant $c$? (Recall that the
symmetrized determinant, on the other hand, \emph{is} efficiently
computable over constant dimensional matrix algebras.) It is
interesting to note that \cite{CS04} have shown an exponential lower
bound for the ABP complexity of the Cayley determinant over even
$2\times 2$ matrices.

A question that arises from the results of Section
\ref{section_cayley} is if one can show -- analogous to those results
-- that the Cayley permanent is at least as hard to compute as the
Cayley determinant. As pointed out in Remark \ref{remark_equiv}, one
way to prove this is to construct a small ABP $P_1$ that computes a
polynomial $F_1$ of degree $n$ such that $F(x_{1,\sigma(1)}\ldots
x_{n,\sigma(n)}) = \sgn(\sigma)$ for every $\sigma\in S_n$. This would
also make the proofs of Theorems \ref{thm_cayley_ckt} and
\ref{thm_cayley_algm} much more transparent.

Finally, note that our results do not imply that the Cayley
determinant is hard to compute over $M_k(\field)$ when $\field$ is a
field of characteristic $2$, since the permanent is known to be
polynomial-time computable over such fields. On the other hand, we
have proved that the Moore determinant over such domains (where $k$ is
polynomial) is hard for $\ParityP$. Can we prove an analogous result
for the Cayley determinant?

\bibliographystyle{plain}

\begin{thebibliography}{}

\bibitem[AJS09]{AJS09} {\sc V. Arvind, P. S. Joglekar, S.
    Srinivasan.} Arithmetic Circuits and the Hadamard Product of
  Polynomials {\sl CoRR abs/0907.4006: (2009).} {\tt
    http://arxiv.org/abs/0907.4006}. In {\em Proceedings FSTTCS 2009
    conference,} December 2009, to appear.

\bibitem[A96]{A96} {\sc H. Aslaksen.} Quaternionic determinants. {\sl
    Math. Intelligencer 18 (1996), no. 3}, 57-65.

\bibitem[B]{B} {\sc A. Barvinok.} New Permanent Estimators via
  Non-Commutative Determinants. preprint available from {\tt
    http://www.math.lsa.umich.edu/\~{}barvinok/papers.html}

\bibitem[BG92]{BG92} {\sc R. Beigel, J. Gill.} Counting Classes:
  Thresholds, Parity, Mods, and Fewness. {\sl Theor. Comput.  Sci.}
  103(1): 3-23 (1992).


\bibitem[CS04]{CS04} {\sc S. Chien, A. Sinclair.} Algebras with
  polynomial identities and computing the determinant {\sl In Proc.
    Annual IEEE Sym. on Foundations of Computer Science,}352-361,
  2004.

\bibitem[CRS03]{CRS03} {\sc S. Chien, L. E. Rasmussen, A. Sinclair.}
  Clifford algebras and approximating the permanent. {\sl J. Comput.
    Syst. Sci. 67(2)}: 263-290 (2003).


\bibitem[FFK94]{FFK94} {\sc S. A. Fenner, L. Fortnow, S. A. Kurtz.}
	Gap-Definable Counting Classes. {\sl J. Comput. Syst.  Sci.} 48(1):
	116-148 (1994).   






\bibitem[GG81]{GG81} {\sc C. Godsil, I. Gutman.} On the matching
	polynomial of a graph, {\sl Algebraic Methods in Graph Theory,
	1981}, pp.  241--249.

\bibitem[KKL+93]{KKL+93} {\sc N. Karmarkar, R. M. Karp, R.	J. Lipton,
	L. Lov\`{a}sz, Michael Luby.} A Monte-Carlo Algorithm for
	Estimating the Permanent. {\sl SIAM J. Comput.} 22(2): 284-293 (1993).

\bibitem[LS09]{LS09} {\sc D. Lundholm, L. Svensson.} Clifford algebra,
  geometric algebra, and applications. Available at {\tt
    http://arxiv.org/abs/0907.5356}


\bibitem[MR09]{MR09} {\sc C. Moore, A. Russell.}  Approximating the
  Permanent via Nonabelian Determinants, {\sl CoRR abs/0906.1702:
    (2009).} {\tt http://arxiv.org/abs/0906.1702}


\bibitem[N91]{N91} {\sc N. Nisan.} Lower bounds for noncommutative
  computation {\sl In Proc. of 23rd ACM Sym. on Theory of Computing,}
  410-418, 1991.


\bibitem[RS05]{RS05} {\sc R. Raz, A. Shpilka.} Deterministic polynomial
  identity testing in non commutative models {\sl Computational
    Complexity,}14(1):1-19, 2005.

\bibitem[V79]{V79} {\sc	L. G. Valiant.} The Complexity of Computing
	the Permanent. {\sl Theor. Comput. Sci.} 8: 189-201 (1979).
		
\end{thebibliography}

\end{document}